 \newtheorem{lemma}{Lemma}
 \newtheorem{proof}{Proof}
\begin{document}
\title{A Simple Reliability Analysis of Complex Service Function Chains (SFCs)}

	\author{Anna Engelmann and Admela Jukan}

%


\maketitle

\begin{abstract}
In the framework of Network Function Virtualization (NFV), the reliability of Service Function Chain (SFC), -- an end-to-end service is presented by a chain of virtual network functions (VNFs), is a complex function of placement, configuration and deployment requirements, both in hardware and software. Previous reliability analysis models cannot be directly applied to SFC because they do not not consider aspects of system component sharing, heterogeneity of system components and their interdependency in case of failures. In this paper, we analyze service reliability of complex SFC configurations, including serial and parallel VNF chaining, as well as their related backup protection components. Our  analysis is based on combinatorial analysis and a reduced binomial theorem, and this simple approach can be utilized to analyzing rather complex SFC configurations. We consider, for the first time, failure dependences among VNF components placed in data centers, racks, and servers. We show that our analysis can easily consider different VNF placement strategies in data center networks in arbitrary configurations and, thus, be effectively used for optimizations of the reliable SFC placement.
\end{abstract}

%
%



\section{Introduction}
Reliability of a complex system can be defined as a probability that the system will successfully complete the processing of intended service. As such, reliability analysis in system engineering is critically important to achieving system robustness in case of any system component failures. In the framework of Network Function Virtualization (NFV), the reliability of Service Function Chain (SFC) presented by a chain of virtual network functions (VNFs), is a complex function of placement, configuration and deployment requirements, both in hardware and software. SFCs suffer from processing vulnerability, i.e., failures caused by hardware or software, which can break down the entire service chain and, thus, interrupt the service. Thus, the reliability attribute of SFC is rather critical \cite{AHMED:2013,Immonen:2014}. The most common failure protection technique is based on redundant configuration and simply redirects the client's requests to backup components to maintain the service. 
\par Previous reliability analysis models cannot be directly applied to SFC because they do not not consider aspects of system component sharing, heterogeneity of system components and their interdependency in case of failures. This makes the reliability analysis of complex SFC configurations a challenge, which has not been addressed yet analytically.  Previous work consider reliability of individual system components, such as of a server, VM, link or switch, and optimize the end-to-end service reliability by solving reliable SFC mapping problem\cite{REL-ETSI, Herker:2015,Ding:2017, Hmaity:2016, Ye:2016}. For instance, in \cite{Guo:2015,Guo:2015b}, authors explored a server redundancy in data centers with server failures to reduce the cost of nonblocking multicast fat-tree data center networks (DCNs). In \cite{Soualah:2017}, a scalable and efficient algorithm for VNFs placement and chaining was proposed as a decision tree problem addressing physical link failures only. In \cite{Xu:2017}, the focus was on choosing different physical nodes to place VNFs of SFC disjointedly, i.e., one VNF per physical node, in order to avoid a single point failure and increase reliability. Papers \cite{Qu:2016,Qu:2018} studied a reliability-aware joint VNF chain placement and flow routing optimization for the case of VNF failures, whereby the number of required VNF backups was determined. These group of  problems were formulated as an Integer Linear Programming (ILP) model. 

\par Previous studies that consider failures of multiple components use assumptions such as disjointness and no interoperability between failed components. Papers \cite{Dai:2007, Dai:2010} provided a hierarchal model and reliability analysis based on Markov models, queuing theory, graph theory, and Bayesian analysis to predict service reliability by dealing with blocking, time-out, network, program and resource failure. Such reliability analysis does not consider any failure protection strategies and assumes that service is reliable, when all service components are available during service run-time. In other words, to reduce complexity the analysis requires only a multiplication of reliability in a component chain. In \cite{Nguyen:2019}, a hierarchical modeling framework is proposed of tree-based data center networks, where server, link, switches and router fail. The proposed hierarchical model consists of three layers, including reliability graphs in the top layer, a fault-tree of the subsystems and stochastic reward nets for the components in the subsystems. Also here, an approximate reliability model is provided, where all service components are disjoint and do not impact each others. Since SFC reliability is highly dependent on the reliability of individual components, whereby failure of some components can lead to failure of other components, we were inspired by idea from \cite{Engelmann:ICC18,Engelmann:2019}, which for the first time analyzed end-to-end service reliability as function of interdependent hardware (server) and software (VNFs) component failures. In this paper, we provide a complete and generalized, simple reliability analysis, that considers failures of DC, racks, servers and VMs and additionally system component sharing, heterogeneity and multitude of components as well as their interdependencies. Our  analysis is based on combinatorial analysis and a reduced binomial theorem. Surprisingly, this simple approach can be utilized to analyzing rather complex and generic SFC configurations. 
\begin{figure*}[!ht]
\centering
\includegraphics[width=1.8\columnwidth]{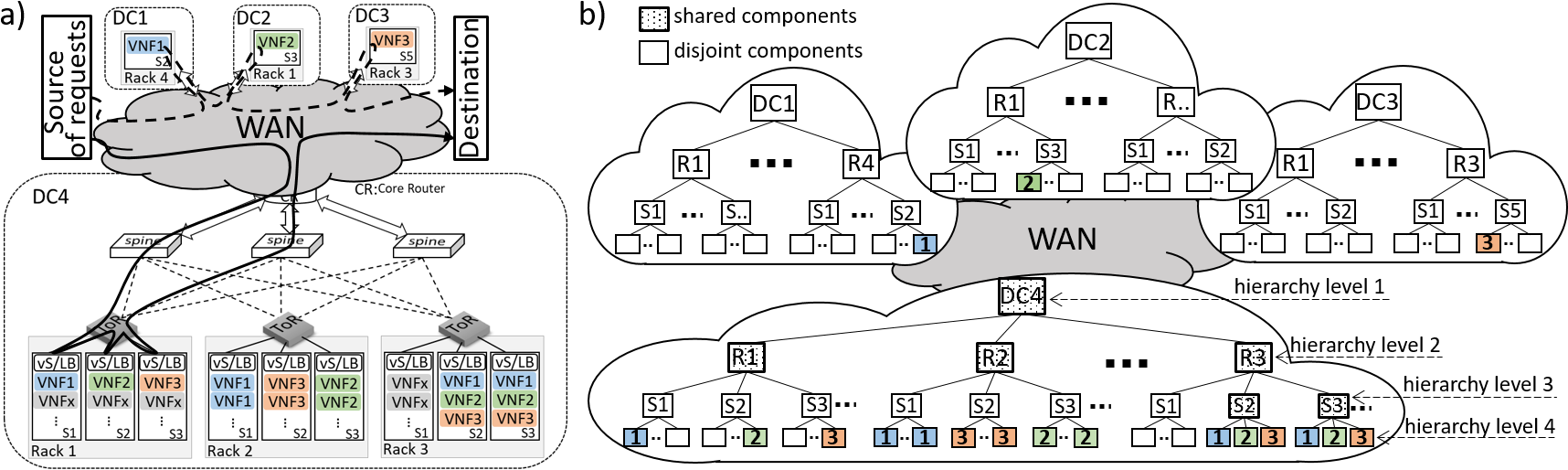}%
  \vspace{-0.3cm}
  \caption{Placement of SFCs over Inter- and Intra-DCN: a) DCN reference architecture and SFC deployment; and b) Hierarchical Component Placement Model and component interdependency.}
  \vspace{-0.5cm}
  \label{net1}
\end{figure*}
\par The rest of the paper is organized as follows. In Sec. II introduce a model of SFC. Sec. III gives a generic analysis of SFC reliability. Sec. IV shows numerical results. Sec. V concludes the paper.

\section {Preliminaries}
We introduce the basic concepts of service function chaining and VNF placement, component failures, and the related failure protection as well as SFC classification.
\subsection {Serial and Parallel SFC Placement}

An SFC is generally an ordered chain of $\Psi$ functions, i.e., VNFs. VNFs are typically allocated by Virtual Machines (VM) installed on servers in data centers (DC), whereby a number of servers are located in the same rack \cite{ETSI-Arch, CORD, GoogleDC}. We generalize the data center network architecture as a hardware fabric of switches, links, racks with multiple servers and multiple VMs per server, as illustrated in Fig. \ref{net1}a). Without loss of generality, we assume that any DCN include four types of switches: core router, Top-of-Rack (ToR), forwarding switches, e.g., spine, and Virtual Switch (vS)) to provide connections between DCN components, i.e., racks, servers, VNFs, and between DCs. Core router provides connectivity between Wide Area Network (WAN) and DCNs. ToR switches forward traffi within a rack. The forwarding switches provide connection to the core router. Each server can host multiple VMs with as many resources as required for one VNF to serve one user request. Thus, the main task of vS, e.g., programmable hypervisor switch with load balancer (LB), is to select a legitimate VM instance, i.e., VNF, for any arrived request and to provide a connection to ToR. 

Embedding VNFs of a SFC within is a challenging issue of building virtual networks with different objectives. As illustrated in Fig. \ref{net1}a), the SFCs of three VNFs, i.e., VNF1, VNF2, VNF3, can be allocated in one DC, i.e., DC4, or distributed over different DCs, i.e., DC1, DC2 and DC3 connected through WAN. Moreover, there are multiple options to allocate VNFs inside DC. For instance, in DC4, different SFCs can be allocated in the same rack or different racks, in different or same servers, but also VNFs of a certain SFC can be distributed over different servers or racks. For example, Rack1 in DC4 allocates a SFC, which is distributed over servers S1, S2 and S3, while, in Rack2, two SFCs are allocated by three servers, i.e., each server provides two VNFs of the same type. In contrast, in Rack3 (DC4), a whole SFC is allocated by one server. Each VNF of a SFC distributed over DC1, DC2 and DC3 is allocated by a separated racks and servers as well. Thus, VNF placement can be based on SFC, where VNFs of different types placed together, e.g., in the same rack, to build SFC or based on VNF type, where VNFs of the same type are placed together.

We refer to a large amount of IP packets passing the same SFC as \emph{traffic flow}. Generally, the traffic demand in an SFC can be accommodated with a single traffic flow. On the other hand, large flows can be split into parallel sub-flows and uniformly distributed over the network. Traffic can be spread over parallel SFCs using Equal Cost Multipath Protocol (ECMP), a well known technique to improve load balancing in servers and the network, i.e., to avoid server and links overload. We model parallelism in three distinctive ways: 1) Traffic Parallelism: the serial traffic flow is split into $k$ parallel \emph{sub-flows}; 2) SFC Parallelism: all VNFs of the requested SFC are replicated into at least $k$ VNF instances to process $k$ sub-flows, resulting in \emph{parallelized} SFC presented by $k$ parallel the so-called \emph{sub-SFCs}; and 3) Path Parallelism: all $k$ sub-flows are independently transmitted over $k$ link disjoint paths and processed in parallel by $k$ parallel sub-SFCs. Fig. \ref{net1}a) illustrates a parallelized SFC, where a request for SFC was split into $k=2$ sub-requests presented by solid and dashed lines. Thus, SFC is also replicated providing $k=2$ sub-SFCs, one allocated in DC4 and another in DC1, DC2 and DC3. As a result, the end-to-end service can be only successfully completed, when both sub-requests were processed by VNF1, VNF2 and VNF3, i.e., both sub-SFCs are available during service run-time.

It should be noted that when flow and SFC parallelism are deployed, we need to synchronize all VNFs of the same type. To this end, an external state repository can be utilized to store internal states of VNFs. After traffic and SFC parallelization, each sub-flow needs to pass its own sub-SFC. That can be implemented with SFC Encapsulation as proposed in \cite{RFC7665, RFC8300}.

\subsection{Hierarchical Component Placement Model}
As illustrated in Fig. \ref{net1}b), and for the purposes of reliability analysis there are $\mathcal C$ component types in the hierarchical tree with $\mathcal C$ hierarchy levels. Each hierarchy level described by a variable $c$, $1\leq c\leq\mathcal C$, also determines the component type. Considering SFC reliability with $\mathcal C=4$ components, any component of type $c=4$ is the lowest in the hierarchy and the component of type $c=1$ is the highest in the DCN hierarchy. Considering different placement strategies in Fig. \ref{net1}, it can be seen that some VNFs in an SFC are allocated in the same components, e.g., DC4 and rack 2, and, thus, share these components, i.e., rack R2 and DC4. We refer to any component which is utilized by all VNFs from the same SFC, i.e., VNFs of different type, as \textbf{shared}. In Fig. \ref{net1}b), all shared components are shown by dotted blocks. The SFCs in DC4 have a different number of shared components, i.e., SFC allocated in R1 has two shared components (R1 and DC4), similarly VNFs of SFCs placed in R2 share R2 and DC4. The SFCs allocated in S2 and S3 (R3) have three shared components each, i.e., server (S2 or S3), R3 and  DC4. 

The opposite of  shared components are the \emph{disjoint components} that separate VNFs of a certain SFC. Thus, we introduce the \textbf{level of disjointedness} $\Delta$, where $1\leq\Delta\leq \mathcal C$, which describes the number of a hierarchy levels unshared by VNFs from the same SFC, i.e., a number of component types provided to place a certain VNF type disjointedly from other VNF types. Since any VNF is placed in a separate virtual machine, i.e., disjointedly from any other VNFs, the minimal level of disjointedness is defined as $\Delta=1$. For instance, $\Delta =1$, when different VNF types are placed in the same server, where  VNFs are disjoint only, such as illustrated in Fig. \ref{net1}, i.e., SFCs placed in R3. The level of disjointedness of SFC placed in DC1, DC2 and DC3 is $\Delta=4$, where different VNF types are placed in different DCs, racks, servers and VMs and have no shared components. Generally, $\Delta$ means that $\Delta$ component types from lower hierarchy levels, i.e., from level $\mathcal C-\Delta+1$ to level $\mathcal C$, are pre-reserved for a certain VNF type, e.g., VNF1, and are disjoint from all other components of the same type $c$, which are pre-reserved for another VNF type, e.g., VNF2. We can also derive a types of shared components, i.e., hierarchy level of shared components, which varies from $\mathcal C-\Delta$ to $1$, whereby the number of shared components is $(\mathcal C-\Delta)$. 

We also introduce a \textbf{heterogeneity degree} $N_r$, where $1\leq N_r\leq \mathcal C$, which shows how many different component types are utilized to separate VNFs of the same type. When VNFs of certain type, e.g., VNFs3, are placed in different data centers (DC), racks (R), servers (S) and VMs resulting in $N_r=4$. In contrast, allocation of a certain VNF type in the same server, i.e., in different VMs only, reduces the heterogeneity degree to $N_r=1$. In Fig. \ref{net1}, the SFC placement in R3 (DC4) results in $N_r=2$ as each VNF of a certain type, e.g., VNF2, has its own VM and server, i.e., both VNFs2 are placed disjointedly in S2 and S3. However, the SFC placement in R2 (DC4) results in $N_r=1$, i.e., VNFs of the same type are allocated in the same server.

Without loss of generality, using hierarchical tree concept, for all $n=k+r$ VNFs of a certain type placed in the same fashion over DCN, where $k\geq1$ and $r\geq0$, there is a need for $n_1$, $1\leq n_1\leq n$, DCs, $n_2$, $1\leq n_2\leq n$, racks inside any DC, i.e., in total $n_1n_2$ racks, and $1\leq n_3\leq n$ servers inside any rack, i.e., in total $n_1n_2n_3$ servers, and $k+r$ VMs, whereby any server can allocate $n_4$, $1\leq n_4\leq k+r$ VNFs of the same type. To this end, we introduce a configuration set $\epsilon=\{n_1, n_2, n_3, n_4\}$ for a definition of a certain placement strategy. For instance, the configuration set can be defined as $\epsilon=\{1, 1, 1, n\}$, as $\epsilon=\{1, 1, n, 1\}$ or as $\epsilon=\{1, 1, 1, 1\}$ for SFC placed in R2, in R3 or in R1 (DC4) in Fig. \ref{net1}, respectively. 


\begin{figure*}[!t]
\centering
\includegraphics[width=1.8\columnwidth]{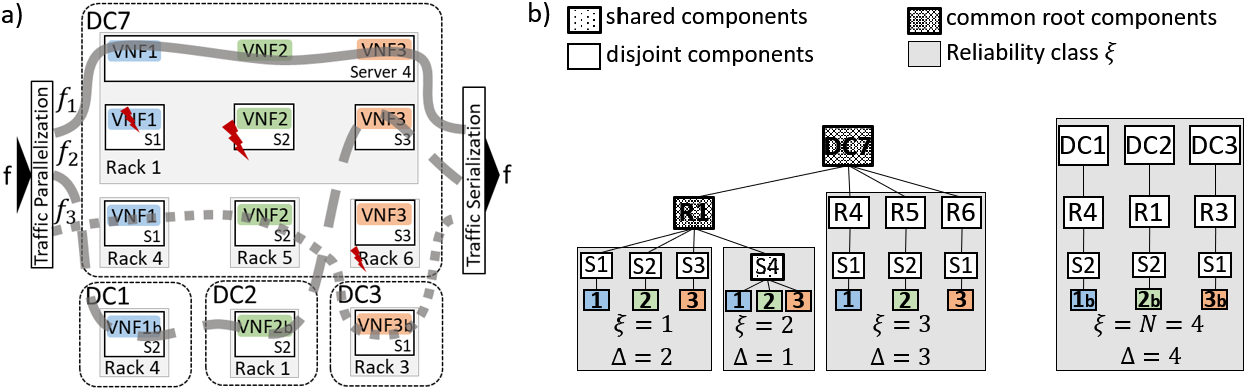}%
  \vspace{-0.3cm}
  \caption{Parallelized SFC with one backup sub-SFCs: a) Deployment of backup components; b) Hierarchical structure of complex service function chaining and reliability classification based on placement strategy.}
  \vspace{-0.6cm}
  \label{parallel}
\end{figure*}

\subsection{Component Failures}\label{failure-hierarchy}
Failure of one component type can result in failure of other component types, e.g., failed server will result in failure of all VMs installed on it. That results in a hierarchical interdependency of components, which needs to be taken into account as, \begin{itemize} 
\item{Data Center.}
The failure of data center can be caused by a failure of connectivity through WAN, failure of core router or connection between core router and forwarding switches. DC failure results in failure of all racks, servers and VMs relevant for building and maintenance of SFC.  
\item{Rack.} Rack failures can be a result of failure of forwarding or ToR switches as well as failure of links between them. Any failure of a rack will cause failure of all servers within it and all VNFs allocated in the rack's severs.
\item{Server.}
Any server can be considered as failed (unavailable) in case of failures of server hardware components, i.e., power supply, memory, etc., or due to failed link between server and ToR or in case of failure of vS, whereby all VMs allocated in the server will fail. 
\item{VM/VNF.}
Since we assume that each VM reserves as many resources as required for one VNF to serve incoming request,  failure of any VM causes failure of one VNF. We assume that VMs, i.e., VNF, on the same server are separated so that their failure has not any impact on any other VMs, i.e., VNFs.
\end{itemize}

\par To consider the interdependency of failures, let us use the model DCN as a hierarchical tree, as shown in Fig. \ref{net1}b). The root node is a DC, which belongs to the highest hierarchy level. The VNFs represent a leaf nodes of the lowest hierarchy level. The links between any nodes define interdependency between components. In the presented architecture, the components of the same type, i.e., of the same hierarchy level, are independent from each other. The components of lower hierarchy level do not impact connected components of higher hierarchy level. For example, when we consider SFC allocated in DC4, a failure of VNF1 will not affect the availability of VNF2 and VNF3 placed in R1 and servers S2 and S3, respectively (Fig. \ref{net1}b)). The availability of S1, Rack1 or DC4 will be not affected by VNF1 failure as well. In contrast, the components of higher hierarchy level impact connected components of lower hierarchy level, e.g., a failure of R1 (level 2) will result in failures of all servers (level 3) inside this rack and all VMs (level 4), i.e., failure of all VNFs (VNF1, VNF2 and VNF3). Additionally, for component failures, the failure of a whole SFCs needs to considered. As shown in Fig. \ref{net1}b), all VNFs from any SFC in DC4 have some shared components, i.e., SFC allocated in R1 has two shared components (R1 and DC4), VNFs of SFCs placed in R2 share R2 and DC4 and the SFCs allocated in S2 and S3 (R3) have shared server (S2 or S3), R3 and  DC4. As a result, a failure of any shared component will result not only in failure of components from lower hierarchy level but also in failure of an entire sub-SFC, i.e., $\Psi$ VNFs of different types.

\subsection{Failure Protection}

Without any additional resources, any SFC is characterized by a certain level of reliability, which depends on the reliability of the underlying DCN components, i.e., DCs, Racks, Servers and VMs reliability. To additionally protect the active VNFs, any SFC can be enhanced by $r$ backup VNFs resulting in $n=k+r$ sub-SFCs. The backup VNFs can then replace any failed active VNF of the same type. Similar to active VNFs, also backup VNFs can fail during service run-time. As shown in \cite{REL-ETSI, mon-ETSI}, to provide highly reliable communication, there are multiple different strategies for active monitoring of NFV, fast VNF failure detection and effective deployment of active and backup VNFs. We do not explicitly consider backup deployment strategy. Instead, we assume any service as successful, if at least $k\geq1$ out of $n$ VNFs of each type are available during service run-time, while at most $r$ VNFs can fail without causing service interruption. Depending on SFC placement strategy, a failure of a different component types will lead to a different amount of failed VNFs. 

Fig. \ref{parallel} illustrates an example of the backup protection of parallelized SFC, where a traffic flow \textbf{f} is split into $k=3$ sub-flows $f_1$, $f_2$ and $f_3$, SFC of three VNFs (VNF1-VNF2-VNF3) is replicated into $k=3$ active sub-SFCs and one backup sub-SFC (VNF1b-VNF2b-VNF3b). The sub-SFCs for sub-flows $f_1$, $f_2$ and $f_3$ are allocated in DC7. The first and second sub-SFCs are allocated in Rack1. All VNFs of the first sub-SFC are placed in the same server 4. In contrast, each VNF of the second sub-SFC for $f_2$ are allocated in different servers, i.e., S1, S2 and S3. The third sub-SFC for $f_3$ is distributed over three different racks in DC7 so, that each VNF has its own rack and server, i.e., VNF1, VNF2 and VNF3 are placed in S1 (Rack 4), S2 (Rack 5) and S3 (Rack 6), respectively. As a result, the parallel sub-flows can be sent over parallel disjoint paths. Each backup VNF of the backup sub-SFC is placed in individual DC. In DC7, VNF1 and VNF2 of active sub-SFC for $f_2$ fail due to VM (VNF1) and server S2 failure in Rack 1. As a result, the sub-flow $f_2$ is redirected toward DC1 and DC2 to be processed by backup VNFs, i.e., VNF1b and VNF2b. The active VNF3 of sub-flow $f_3$ is also failed due to failure of Rack 6 in DC7. Consequently, $f_3$ is redirected to the backup VNF3b placed in DC3. Although there are two hardware and one VM failures resulting in lost of three active VNFs of different types, the end-to-end service is successful as all three sub-flows ($f_1$, $f_2$ and $f_3$) arrive at destination and are serialized into original flow \textbf{f}.

\subsection{SFC Reliability Classes}
In this paper, we assume that any component type $c$ utilized to allocate $\Psi$ VNFs of a certain SFC have an equal component dependent reliability $p_c$, i.e., $p_c^1=...=p_c^\psi=...=p_c^\Psi=p_c$, where $1\leq c\leq\mathcal C$ and $\psi$, $1\leq \psi\leq\Psi$, shows the VNF type allocated by component $c$. In this case, the SFC build a reliability class $\xi$, whereby the component dependent reliability of different component types do not need to be the same, i.e., $p_{c-1}\neq p_c\neq p_{c+1}$. 
Generally, there can be at most $N=n=k+r$ reliability classes, i.e., $1\leq \xi\leq N$, where each sub-SFC has components with reliability different from component reliability utilized by any other SFCs. Due to the fact that, for a successful service completion, we need any $k$ out of $n$ available sub-SFC, we do not need to differentiate between active and backup SFCs, whereby both SFC types can belong to the same or different reliability classes. 
As a result, the component dependent reliability of each component type $c$ is noted as $p_{c_{\xi}}$, where $\xi$ shows the reliability class of any component type $c$ utilized to allocate VNFs from this reliability class. Depending on placement strategy and the reliability class $\xi$, there can be $n_{c_{\xi}}$ components of type $c$ inside a component of type $c+1$, whereby the configuration set is noted as $\epsilon_\xi=\{n_{1_\xi}, n_{2_\xi}, n_{3_\xi}, n_{4_\xi}\}$. When all components of a certain type $c$ used for allocation of all sub-SFCs have the same component dependent reliability, i.e., $p_{c_1}=...=p_{c_\xi}=p_{c_{\xi+1}}=...=p_{c_N}$, i.e., all components from the same hierarchy level have the same component dependent reliability independent from sub-SFC allocated, these sub-SFCs can still belong to different classes, at most $N=n$, or build one reliability class. The first option is possible, if any sub-SFC is placed over DCNs following different placement strategies, whereby each sub-SFC $\xi$ is, then, characterized by its own characterization sets of a configuration set $\epsilon_\xi$, level of disjointedness $\Delta_\xi$ and the heterogeneity degree $N_{r_\xi}$, which is different from all other characterization sets of other $n-1$ sub-SFCs.
On the other hand, if there are $n'$, $1\leq n'\leq n$, sub-SFCs, which are allocated by components with the same component reliability, i.e., $p_{c_\xi}=p{c_\xi+1}$, $p_{{c+1}_\xi}=p{{c+1}_\xi+1}$, etc., and all $n'$ sub-SFCs are placed over DCNs following the same placement strategy, i.e., have equal configuration sets $\epsilon_\xi=\epsilon_{\xi+1}$ and level of disjointedness $\Delta_\xi=\Delta_{\xi+1}$ resulting in the same heterogeneity degree $N_{r_\xi}=N_{r_{\xi+1}}$, then all $n'\equiv n_\xi$ SFCs belong to the same reliability class $\xi$. 
For instance, in Fig. \ref{net1}, the both SFCs placed in R2 or R3 in DC4 build one SFC reliability class, respectively.
The two sub-SFCs in Rack3 have $\epsilon_\xi=\epsilon_{\xi+1}=\{1,1,1,1\}$, which can be combined into $\epsilon_{\xi}=\{1,1,2,1\}$, both have disjointedness level of $\Delta_\xi=\Delta_{\xi+1}=1$ and the heterogeneity degree is explicit, i.e., $N_{r_\xi}=2$. In contrast, the sub-SFC allocated in R1 (DC4) and sub-SFC distributed over DC1, DC2 and DC3 can not be combined into one reliability class as their levels of disjointedness are different, i.e., $2$ and $4$, respectively.
Fig. \ref{parallel} shows $n=4$ sub-SFCs from different reliability classes each, i.e., $N=4$, whereby each sub-SFC is placed in different way and has different level of disjointedness $\Delta_\xi$. Moreover, the heterogeneity degree $N_r$ can not be explicitly defined providing overall value of $2$, i.e., VNFs of the same type are separated at least by VMs and servers. As can be seen, some sub-SFCs from different reliability classes utilize the same component, i.e., R2 and DC7, which we define as follows: The components of any hierarchy level jointly utilized by sub-SFCs from different reliability classes are referred to as the \textbf{common root components}. 
For instance, in Fig. \ref{parallel}b), the sub-SFCs of reliability class $\xi=1,2$ have two common roots, i.e., R1 and DC7. Obviously, that any common root is a special shared component, i.e., a failure of any common root results in failure of multiple different reliability classes, i.e., sub-SFCs. Moreover, DC7 is the common root of three reliability classes, i.e., $\xi=1,2,3$, while the sub-SFC of class $\xi=4$ is placed separately from other reliability classes and does not have any common roots with them. 

To describe the common root components of any type $c$ and all related reliability classes combined by this component $c$, we introduce a set $\Phi=\{c_{w_1}, c_{w_2}, ..., c_{w_\rho}, ...\}$, where any $c_{w_\rho}$ describes a component type and each $w_\rho$ presents a set of indexes of reliability classes combined by the component $c_{w_\rho}$, i.e., $w_\rho=\{\xi_1, \xi_2, ...\}$ and $c_{\xi_1}=c_{\xi_2}=...=c_{w_\rho}$. Since all components of type $\mathcal C$ are always disjoint, i.e., use different VMs, they can not be a common root component $c_{w_\rho}$ resulting in $1\leq c_{w_\rho}\leq\mathcal C-1$, i.e., DC, rack or server. Note, that using the component type $\mathcal C-2$, i.e., server, as a common root will result in the same placement strategies for all sub-SFCs, which utilize this root. Thus, in case all components of type $\mathcal C$ have the same component dependent reliability, i.e., $p_{\mathcal C_1}=...=p_{\mathcal C_\xi}=p_{\mathcal C_{\xi+1}}=...$, all sub-SFCs connected by this common root will be from the same reliability class and, thus, the common root can be considered as a shared component only. Considering the example from Fig. \ref{parallel}, the configuration set for the common roots can be determined as $\Phi=\{2_{w_1}, 1_{w_2}\}$ with $w_1=\{1, 2\}$ and $w_2=\{1, 2, 3\}$, whereby the presented parallelized and backup protected SFC can be considered as a complex service function chaining. Thus, a utilization of $n$ sub-SFCs with $\Psi$ VNFs each results in \textbf{complex service function chaining}, when these sub-SFCs belong to $N\leq n$ different reliability classes and, thus, include a different amount of heterogeneous and interdependent DCN components.

\begin{table}[]
\centering
\caption{Notation}\label{t1}
\vspace{-0.3cm}
\resizebox{\columnwidth}{!}{
\begin{tabular}{|l|l|}
\cline{1-2}
$k$  & number of parallel sub-flows and, thus, active sub-SFCs; \\
$r$ &  number of backup sub-SFCs;  \\
$n$ &  total number of pre-reserved sub-SFCs, i.e., $n=k+r$;  \\
$\Psi$ & a  number of VNFs in a SFC and, thus, sub-SFC; \\
$\mathcal C$& number of hierarchy levels, i.e., different component types;\\
$c$& a certain component type, $1\leq c\leq\mathcal C$;\\
$N$ & total number of reliability classes, i.e., $1\leq N\leq n$;\\
$\xi$ & any certain reliability class, $1\leq \xi \leq N$;\\
$n_\xi$& a number of SFCs of reliability class $\xi$, $1\leq \xi\leq N$\\
$\bar n_{1_{\xi}}$& number of components from the highest hierarchy of reliability class $\xi$;\\
$n_{c_{\xi}}$ & a number of components of type $c$ inside components of type $c-1$;  \\
$n_{1}$ & a number of DC, if $\mathcal C=4$;  \\
$n_{2}$ & a number of racks inside a DC, if $\mathcal C=4$;  \\
$n_{3}$ & a number of servers inside a rack, if $\mathcal C=4$;  \\
$n_{4}$ & a number of VNFs of certain type inside a server, if $\mathcal C=4$;  \\
$p_{c_{\xi}}$ & component reliability of any component of type $c$ utilized to allocate class $\xi$;  \\
$p_{4}$ & the reliability of VNF for $\mathcal C=4$;  \\
$p_{3}$ &  the reliability of server for $\mathcal C=4$;  \\
$p_{2}$ & the reliability of rack for $\mathcal C=4$; \\ 
$p_{1}$ & the reliability of DC for $\mathcal C=4$;   \\
$N_r$ & number of component types to disjointedly place VNFs of the same type;\\
$\Delta$ & number of component types to disjointedly place VNFs of different type;\\
\cline{1-2}
\end{tabular}
} \vspace{-0.6cm}
\end{table}

\section{SFC Reliability Analysis}
Without loss of generality, we define service reliability as a probability that at least $k\geq1$ sub-flows can successfully traverse $k$ out of $n=k+r$ sub-SFCs to complete the requested service, where $r\geq0$. Since each sub-SFC consists of $\Psi$ different VNF types, at least $k\Psi$ VNFs of each type have to be available during service run-time. Our reliability analysis assumes that any component type $c$ utilized to allocate $\Psi$ VNFs of a certain reliability class $\xi$ have an equal component dependent reliability as previously discussed.

When backup sub-SFCs are placed so that there are $r$ backup VNFs, any components from any hierarchy level $c$ and any reliability class $\xi$ can fail as long as they do not result in a number of failed VNFs which is larger than $r$. However, there is a need to take into account VNF placement strategy and the interrelation between involved components of different and same hierarchy level. To be able to control the amount of components of any type and from any reliability class, which can fail without affecting the overall service maintenance, we utilize a parameter \textbf{Acceptable Component Failures (ACF)} $A_{c_\xi}$, i.e., the amount of failed hardware and software components that do not lead to interruption of the service. It is obvious that when no backup or other protection is applied, there is no ACF and all active components, i.e., DCs, racks, servers and VNFs, must be available during service run-time. Without loss of generality, the ACF strongly depends on the number of backup components and the placement of VNFs from a certain reliability class inside DCs, racks and servers. Additionally, we introduce a variable $\Lambda_{c_\xi}$, which describes the \textbf{number of remaining available components} of a certain type $c$, which were not affected by occurred failures of any other components, but can still fail during service run-time. Using these parameters, we next derive the placement independent and placement dependent end-to-end service, i.e., SFC, reliability, which is generally the probability that at least $\Lambda_c-A_c$ out of $\Lambda_c$ components of type $c$ are available and, thus, described by a well known Binomial formula determined as $\sum_{f_{c}=0}^{A_{c}}p(\Lambda_c, f_c, p_c)=\sum_{f_{c}=0}^{A_{c}}\binom{\Lambda_{c}}{f_{c}}p_{c}^{\Lambda_{c}-f_{c}}(1-p_{c})^{f_{c}}$, where $p(\Lambda_c, f_c, p_c)$ is a probability mass function of binomial distribution. The utilized notations are introduced in Table \ref{t1}.

\subsection{Placement independent SFC reliability}\label{indep}
In DCNs, where only components from the lowest hierarchy level $c=\mathcal C=4$, i.e., VNFs, can fail and all other components from hierarchy level from $1$ to $\mathcal C-1$, i.e., DCs, racks and servers, have reliability of $100\%$, i.e., $p_1=p_2=p_3=1$ for any reliability class, the end-to-end service reliability is independent of VNF placement strategy and the service is successful if at least $k$ out of $n$ VNFs from any reliability class are available. 
\begin{lemma}\label{Lem1}
The placement independent SFC reliability is only a function of the amount of reliability classes $N$, SFC length $\Psi$, ACF for any reliability class $\xi$ and the amount of available components $\Lambda$, i.e.,
\begin{equation}\label{Rindep}
R(N)=\left[\prod_{\xi=1}^{N}\sum_{f_{4_\xi}=0}^{A_{4_\xi}}\!\!\!p(\Lambda_{4_\xi}, f_{4_\xi}, p_{4_\xi})\right]^{\Psi}
\end{equation}
, where $A_{4_\xi}=min\{n_\xi, r-\sum_{\xi=1}^{\xi-1}f_{4_\xi}\}$ and $\Lambda_{4_\xi}=n_\xi$.
\end{lemma}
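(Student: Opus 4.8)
The plan is to reduce the claimed reliability, under the hypothesis $p_1=p_2=p_3=1$, to a purely combinatorial statement about independent binomially-distributed VNF failures, and then to recognise the nested sum in \eqref{Rindep} as the exact enumeration of all admissible failure patterns. First I would fix the success event: since every component above the VNF level is perfectly reliable, the only random objects are the $\Psi n$ VNFs (one of each of the $\Psi$ types in each of the $n$ sub-SFCs), and by the disjointness assumption stated in Section~\ref{failure-hierarchy} (each VNF in its own VM, with VM failures on a common server having no mutual impact) these failures are independent. The service succeeds precisely when, for each of the $\Psi$ VNF types, at least $k$ of its $n$ instances survive, equivalently at most $r=n-k$ instances of that type fail.

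Second I would factorise over the $\Psi$ VNF types. Because each type is replicated identically across the same $N$ reliability classes, with the same per-class counts $n_\xi$ and the same per-class reliabilities $p_{4_\xi}$, the ``at most $r$ fail'' event is governed by the same distribution for every type; these $\Psi$ events are independent, so the total reliability equals the per-type reliability raised to the power $\Psi$, which yields the outer bracket and exponent of \eqref{Rindep}. It then remains to evaluate, for a single VNF type, the probability that at most $r$ of its $n$ instances fail.

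Third, within reliability class $\xi$ there are exactly $n_\xi=\Lambda_{4_\xi}$ instances of the given type, each failing independently with probability $1-p_{4_\xi}$, so the count $f_{4_\xi}$ of failures in that class is $\mathrm{Binomial}(n_\xi,1-p_{4_\xi})$ with mass $p(\Lambda_{4_\xi},f_{4_\xi},p_{4_\xi})$. The classes being independent, the per-type reliability is $\sum \prod_{\xi} p(\Lambda_{4_\xi},f_{4_\xi},p_{4_\xi})$ taken over all vectors $(f_{4_1},\dots,f_{4_N})$ satisfying $0\le f_{4_\xi}\le n_\xi$ and $\sum_{\xi} f_{4_\xi}\le r$.

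The main obstacle, and the only nontrivial step, is to show that this constrained sum coincides with the nested sum in \eqref{Rindep} carrying the running upper limits $A_{4_\xi}=\min\{n_\xi,\ r-\sum_{\xi'<\xi}f_{4_{\xi'}}\}$. I would argue by induction on the class index: the $\xi$-th summation lets $f_{4_\xi}$ range from $0$ to $A_{4_\xi}$, and the minimum simultaneously imposes the physical bound $f_{4_\xi}\le n_\xi$ (one cannot lose more instances than are present) and the budget bound $\sum_{\xi'\le\xi} f_{4_{\xi'}}\le r$ (the backup budget left after the earlier classes). Propagating this through all $N$ nested sums reproduces exactly the admissible region $\{0\le f_{4_\xi}\le n_\xi,\ \sum_{\xi} f_{4_\xi}\le r\}$, with no pattern omitted and none counted twice, so the nested and constrained sums are equal. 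Substituting this back into the factorisation of the previous step gives \eqref{Rindep}.
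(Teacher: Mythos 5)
Your proposal is correct and follows essentially the same route as the paper's proof: independence of VNF failures gives the factorisation over the $\Psi$ types (the serial model, yielding the exponent $\Psi$), and the per-type reliability is the probability that at most $r$ of the $n$ instances fail, written as nested binomial sums over the reliability classes with the running budget $A_{4_\xi}=\min\{n_\xi,\,r-\sum_{\xi'<\xi}f_{4_{\xi'}}\}$ (the parallel model). The only difference is presentational: the paper builds the nested-sum form up from the $N=1$ and $N=2$ special cases and asserts the general ACF, whereas you justify it directly by showing the nested sums enumerate the admissible region $\{0\le f_{4_\xi}\le n_\xi,\ \sum_\xi f_{4_\xi}\le r\}$ exactly once, which makes that step slightly more explicit but does not change the argument.
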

\begin{proof}
Let us first assume that a parallelized SFC consists of $\Psi=1$ VNF types and all $n$ VNFs have the same VNF reliability $p_4<1$, i.e., there is only one reliability class ($N=1$). Thus, the SFC reliability follows the parallel reliability model described by Binomial formula for calculation of the probability that at least $k$ out of $\Lambda_{4}=n$ components are available, i.e., $R(N=1)=\sum_{f_{4}=0}^{A_{4}}p(\Lambda_4, f_4, p_4)$.

When, for instance, the reliability of all active VNFs and all backup VNFs of a certain type is $p_{4_1}$ and $p_{4_2}$, respectively, and $p_{4_1}\neq p_{4_2}$, there are two reliability classes, $N=2$. Then, the SFC reliability is determined as a probability that at least $k$ over all $\Lambda_{4_1}+\Lambda_{4_2}=n_1+n_2=k+r=n$ VNFs do not fail and at least $k$ sub-SFCs can be composed to serve $k$ sub-flows. Since we assumed SFC of one VNF, $\Psi=1$, at least $k-f_{4_1}$ active VNFs, where $0\leq f_{4_1}\leq min\{k,r\}$, and at least $f_{4_1}$ backup VNFs have to be available, while also $f_{4_2}$ backup VNFs can fail without service interruption, where $0\leq f_{4_2}\leq r-f_{4_1}$. Thus, AVF of active and backup VNFs is determined as $A_{4_1}=min\{k,r\}$ and $A_{4_2}=min\{r, r-f_{4_1}\}=r-f_{4_1}$, respectively. Following the Binomial formula and, additionally, applying the serial reliability model as VNFs from both reliability classes, i.e., at least $k$, have to be available at the same time; the SFC reliability is determined as $R_4(N=2)=\sum_{f_{4_1}=0}^{A_{4_1}}p(\Lambda_{4_1}, f_{4_1}, p_{4_1})\sum_{f_{4_2}=0}^{A_{4_2}}p(\Lambda_{4_2}, f_{4_2}, p_{4_2})$, when $\Psi=1$, $\Lambda_{4_1}=k$, $\Lambda_{4_2}=r$ and $N=2$. 

It is obvious, that this example can be extended to $N\leq k+r$ reliability classes, whereby it is required to extend the equation above to up to $N$ summations. In this case, the AVF of any reliability class $\xi$ is calculated as $A_{4_\xi}=min\{n_\xi, r-\sum_{\xi=1}^{\xi-1}f_{4_\xi}\}$ to ensure that the number of failed VNFs over all reliability classes, i.e., from $1$ to $\xi-1$, is not larger then the number of backup VNFs provided. 
Thus, the SFC reliability for $\Psi=1$ and $N\geq1$ can be generalized as
\begin{equation}\label{RelNPsi1}
\begin{split}
R'(N)=&\prod_{\xi=1}^{N}\sum_{f_{4_\xi}=0}^{A_{4_\xi}}\!\!\!p(\Lambda_{4_\xi}, f_{4_\xi}, p_{4_\xi})=
\\&\!\!\!\!\!\!\sum_{f_{4_1}=0}^{A_{4_1}}\!\!\!p(\Lambda_{4_1}, f_{4_1}, p_{4_1})...\sum_{f_{4_N}=0}^{A_{4_N}}p(\Lambda_{4_N}, f_{4_N}, p_{4_N})
\end{split}
\end{equation}
, where $R'(N)$ is a probability that there are at least $k$ available out of $n$ VNFs of a certain type (VNF1) over all $N$ reliability classes.

Next, let us assume that the parallelized SFC consists of $\Psi>1$ different VNFs. Then, the SFC reliability is determined as a probability that at least $k$ over all $k+r=n$ VNFs of each type out of $\Psi$ types do not fail and at least $k$ sub-SFCs can be composed by $k\Psi$ VNFs of each type to serve $k$ sub-flows. When all VNFs of the same type, have the same VNF reliability, e.g., all VNF1 and all VNF2 have reliability $p_{4^1}$ and $p_{4^2}$, respectively, the SFC reliability is a probability that at least $k$ VNFs of each type, e.g., $k$ VNF1 and $k$ VNF2, are available as described by $R(N=1)$ above. Applying the serial reliability model, the SFC reliability is $R^{1...{\Psi}}=R^1(N=1)\cdot R^2(N=1)...R^{\Psi}(N=1)=\prod_{\psi=1}^{\Psi}  R^\psi(N=1)=[R(N=1)]^\Psi$. In case all VNFs of any type, i.e., from VNF1 to VNF$\Psi$, utilized to build a certain SFC of reliability class $\xi$ have the same VNF reliability value, i.e., $p_{4_\xi^1}=p_{4_\xi^2}=...=p_{4_\xi^{\Psi}}=p_{4_\xi}$, at most $A_4=r$ VNFs of any type can fail without any impact on service maintenance. Then, the service reliability can be generalized following the serial reliability model for the whole SFC of $\Psi$ VNFs as 
\begin{equation}\label{Rpower}
R=[R(N=1)]^{\Psi}=\left[\sum_{f_4=0}^{A_4}p(\Lambda_{4}, f_{4}, p_{4})\right]^{\Psi}
\end{equation}
Similarly, when there are $N$ reliability classes and any reliability class $\xi$ contains $n_\xi$ sub-SFCs, all $k_\xi\Psi$ VNFs from the same reliability class have the same VNF reliability $p_{4_\xi}$, the overall SFC reliability can be derived from Eq. \eqref{RelNPsi1} as $R(N, \Psi)=R'^{1...\Psi}(N)=\prod_{\psi=1}^{\Psi}R'^{\psi}(N)=[R'(N)]^{\Psi}$
, which results in Eq. \eqref{Rindep}.
\end{proof}

\subsection{Placement Dependent SFC reliability}
When any DCN component, i.e., DC, rack , server and VNF, can fail, i.e., $p_1<1$, $p_2<1$, $p_3<1$, the SFC reliability is a function of the amount of available and failed active and backup DCs, racks, servers and VNFs involved and their interdependence, which is a function of VNF placement strategy over Intra- and Inter-DCN. Let us further simplify notation as $P(\Lambda_c)=p(\Lambda_c,f_c,p_c)$ to reduce the size of some formulas provided below, where $1\leq c\leq \mathcal C$, $c\in\{4,3,2,1\}$. Since our main assumption is that all components from hierarchical level $c$ utilized to allocate the certain sub-SFC of reliability class $\xi$ have the same component reliability $p_{c_\xi}=p_c$, all $\Psi$ different VNFs from the certain sub-SFC of reliability class $\xi$ have the same VNF component reliability, i.e., $p_{4_{\xi}^{1}}=p_{4_{\xi}^{2}}=...=p_{4_{\xi}^{\Psi}}=p_{4_{\xi}}$, whereby all servers, racks and DC required to allocate this sub-SFC from class $\xi$ have the same reliability as well. In other words, all $\bar n_{1_{\xi}}$ DCs, all $(\bar n_{1_{\xi}} \cdot n_{2_{\xi}})$ racks and all $(\bar n_{1_{\xi}}\cdot n_{2_{\xi}}\cdot n_{3_{\xi}})$ servers have the same component reliability $p_{1_{\xi}}$, $p_{2_{\xi}}$ and $p_{3_{\xi}}$, respectively.

\subsubsection{One Reliability Class}
There is only one reliability class $N=1$, when all $n$ active and backup sub-SFCs are placed in the same fashion and all components of each type $c$ have the same component dependent reliability, i.e., $p_c$.  To allocate all $n$ VNFs of a certain type, e.g., VNF1, $1\leq \bar n_1\leq n$ active and backup DCs, $1\leq n_2\leq n$ active or backup racks inside any DC, i.e., in total $(\bar n_1)n_2$ active and backup racks, and $1\leq n_3\leq n$ active or backup servers inside any rack, i.e., in total $(\bar n_1)n_2n_3$ active and backup servers, and $(\bar n_1)n_2n_3n_4=(k+r)$ VNFs are required,  whereby any server can allocate $1\leq n_4\leq n$ VNFs of the same type. Similarly, the same amount of DC, racks and servers is utilized to allocate each other VNF type from the same sub-SFC, i.e., VNF2, ..., VNF$\Psi$. Thus, the number of components from hierarchy level $c$ allocated to the SFC can be calculated as $\bar n_c$, if $c=1$, and as $\bar n_1\prod_{c'=2}^{c}n_{c'}$, while some of this components, i.e., $\Delta$ different component types from the lower hierarchy level, are utilized for a certain VNF type only, other $\mathcal C-\Delta$ component types are the shared components and used to allocate VNFs of different types. Thus, there is a need to consider the VNF placement strategy and resulting level of disjointedness $\Delta$, i.e., to take into account the components' interdependency.

\begin{lemma}
When there is one reliability class, $N=1$, the placement dependent SFC reliability is a function of the disjointedness level $\Delta$, SFC length $\Psi$, ACF $A_c$ and the amount of available components $\Lambda_c$ for any component type $c$, i.e.,
\begin{equation}\label{RDelta}
\begin{split}
R_{_{\Delta}}&=\prod_{c=1}^{\mathcal C-\Delta}\sum_{f_c=0}^{A_c}\!\!P(\Lambda_c)\left[\prod_{c=\mathcal C-\Delta+1}^{\mathcal C}\sum_{f_c=0}^{A_c}\!\!P(\Lambda_c)\right]^\Psi
\end{split}
\end{equation}
, where $\prod_{c=1}^{\mathcal C-\Delta}\sum_{f_c=0}^{A_c}\!\!P(\Lambda_c)=1$, if $\Delta=\mathcal C$.
\end{lemma}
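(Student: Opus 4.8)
The plan is to build on Lemma~\ref{Lem1} and the serial--parallel reliability structure already established for $\Psi$ VNF types, then introduce higher-level component failures through the hierarchical interdependency of Sec.~\ref{failure-hierarchy}. First I would fix a single reliability class and recall that the service succeeds iff at least $k$ of the $n$ sub-SFCs are fully available, where a sub-SFC is available iff each of its $\Psi$ VNF-type branches survives. The central idea is to split the $\mathcal C$ hierarchy levels at the disjointedness boundary $c=\mathcal C-\Delta$: the \emph{shared} levels $c\in\{1,\dots,\mathcal C-\Delta\}$ carry components used by all $\Psi$ VNF types of a sub-SFC, whereas the \emph{disjoint} levels $c\in\{\mathcal C-\Delta+1,\dots,\mathcal C\}$ carry components dedicated to a single VNF type.

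Next I would evaluate the reliability level by level from the top of the tree downward, using the hierarchical independence assumptions of Sec.~\ref{failure-hierarchy}: components of the same type are mutually independent, and, conditioned on the surviving components of the parent level, the failures at each child level are independent. This conditional factorization is what turns the joint availability probability into the nested product of binomial sums $\prod_c\sum_{f_c=0}^{A_c}P(\Lambda_c)$, where $\Lambda_c$ and $A_c$ depend on the numbers $f_{c'}$ of failures already incurred at the higher levels $c'<c$ through the cascading rule that a failed parent removes all its descendants. At each level the term $\sum_{f_c=0}^{A_c}P(\Lambda_c)$ is exactly the binomial probability that at most $A_c$ of the $\Lambda_c$ still-available components of type $c$ fail, i.e. that enough survive to keep at least $k$ sub-SFCs intact.

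It remains to explain why the shared part enters once while the disjoint part is raised to the power $\Psi$. For the disjoint levels, each of the $\Psi$ VNF types owns a separate sub-tree of components, and by the equal-reliability and equal-configuration assumption these $\Psi$ sub-trees are identically distributed and mutually independent; hence their joint contribution factors into $\Psi$ identical copies of $\prod_{c=\mathcal C-\Delta+1}^{\mathcal C}\sum_{f_c=0}^{A_c}P(\Lambda_c)$, giving the $\Psi$-th power. Setting $\Delta=\mathcal C$ collapses the shared product to the empty product $1$; if in addition only VNFs may fail ($p_1=p_2=p_3=1$), each higher-level binomial sum equals $1$ and the disjoint factor reduces to $\bigl[\sum_{f_4=0}^{A_4}P(\Lambda_4)\bigr]^{\Psi}$, recovering Eq.~\eqref{Rpower} of Lemma~\ref{Lem1} as the base case. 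For the shared levels, a single component serves all $\Psi$ VNF branches simultaneously, so its failure removes an entire sub-SFC as one unit rather than one branch at a time; the availability event for the shared portion is therefore common to all VNF types and must be counted only once, yielding the single, unpowered product $\prod_{c=1}^{\mathcal C-\Delta}\sum_{f_c=0}^{A_c}P(\Lambda_c)$.

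The main obstacle I anticipate is the bookkeeping of the cascading failures across the shared/disjoint boundary: I must verify that the failure counts $f_c$ at shared levels correctly reduce the available-component counts $\Lambda_c$ at the disjoint levels below (a failed shared component deletes the entire affected sub-SFC, hence all $\Psi$ of its disjoint sub-trees at once), and that the acceptable-failure bounds $A_c$ are set so that the surviving configuration always leaves at least $k$ complete sub-SFCs. Making this precise requires tracking, for each placement configuration $\epsilon=\{n_1,n_2,n_3,n_4\}$, how a failure of $f_c$ components of type $c$ maps to a number of lost sub-SFCs, and confirming that the nested dependence of $(\Lambda_c,A_c)$ on the higher-level $f_{c'}$ is exactly the one that makes the product of conditional binomial sums equal the true joint reliability.
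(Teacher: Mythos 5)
Your proposal follows essentially the same route as the paper's proof: first establish the $\Psi=1$ case as a product of binomial sums over all $\mathcal C$ hierarchy levels (the paper's Eq.~\eqref{R-Psi1}), then split the levels at the boundary $c=\mathcal C-\Delta$ so that the $\mathcal C-\Delta$ shared summations appear once while the $\Delta$ disjoint summations are raised to the power $\Psi$; the paper reaches the identical conclusion by enumerating $\Delta=4,3,2,1$ explicitly and then reading off the bracket placement, whereas you argue it in one step via conditional factorization down the tree, which is slightly cleaner but not a different method. One slip to correct: you state the success criterion as ``at least $k$ of the $n$ sub-SFCs are fully available, where a sub-SFC is available iff each of its $\Psi$ branches survives,'' but the formula you then derive (and Eq.~\eqref{RDelta} itself) computes the probability that at least $k$ VNFs of \emph{each} type survive, independently across types --- i.e., that $k$ sub-SFCs can be \emph{recomposed} from the survivors, which is the paper's stated model (``at least $k$ sub-SFCs can be composed by $k\Psi$ VNFs''). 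Under your aligned criterion the $\Psi$-th-power factorization of the disjoint part would fail, because the intersection over VNF types must then be taken per sub-SFC before counting survivors (e.g., for $k=1$, $n=2$, $\Psi=2$ the two criteria give $[1-(1-p)^2]^2$ versus $1-(1-p^2)^2$, which differ). With the composability definition your argument is exactly the intended one, and your closing remarks about the cascading dependence of $(\Lambda_c,A_c)$ on higher-level failure counts correctly identify the bookkeeping that the paper defers to its Acceptable Component Failures subsection and Appendix~\ref{firstLambda}.
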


\begin{proof}
Let us first assume that SFC consists of $\Psi=1$ VNF types, e.g., VNF1 only, while all $n=k+r$ active and backup VNFs have the same VNF components reliability $p_4$ as well as all DCs, racks and servers involved to allocate these $n$ VNFs have the same component reliability $p_1$, $p_2$ and $p_3$, respectively. Then, the SFC reliability can be calculated with Binomial formula and the serial reliability model as previously discussed in Sec. \ref{indep}, whereby it is required to consider the availability of all component types involved $1\leq c\leq \mathcal C$, i.e., a minimal amount of DC, racks and servers is required to ensure availability of at least $k$ out of $n$ VNFs, i.e.
\begin{equation}\label{R-Psi1}
\begin{split}
&R(\Psi=1)=\prod_{c=1}^{\mathcal C}\sum_{f_c=0}^{A_c}p(\Lambda_c,f_c,p_c)=\sum_{f_1=0}^{A_1}p(\Lambda_1,f_1,p_1)\cdot \\&\sum_{f_2=0}^{A_2}p(\Lambda_2,f_2,p_2)\cdot
\sum_{f_3=0}^{A_3}p(\Lambda_3,f_3,p_3)\cdot
\sum_{f_4=0}^{A_4}p(\Lambda_4,f_4,p_4)
\end{split}
\end{equation}
To extend Eq. \eqref{R-Psi1} to arbitrary SFC length, i.e., $\Psi\geq 1$ VNFs, there is a need to consider the VNF placement strategy and resulting level of disjointedness $\Delta$. Since our placement strategies are based on SFC or VNF type, $\mathcal C-\Delta$ component types are the shared components and allocate a whole sub-SFC. Thus, the service reliability calculation depends on the level of disjointedness, i.e., $\Delta$, and the serial reliability model can be only applied to $R(\Psi=1)$, similarly as for Eq. \eqref{Rpower}, when each VNF type of a certain sub-SFC is allocated by different components of any hierarchy level $c$, $1\leq c\leq \mathcal C$, resulting in $\Delta =4$. In this case, any VNF type is placed in different DC, different racks, different servers and VMs and the service reliability is determined with Eq. \eqref{R-Psi1} as
$R_{_{\Delta=4}}=\prod_{\psi=1}^{\Psi}R(\Psi=1)=[R(\Psi=1)]^{\Psi}=\left[\sum_{f_1=0}^{A_1}P(\Lambda_1)\sum_{f_2=0}^{A_2}\!\!P(\Lambda_2) \sum_{f_3=0}^{A_3}\!\!P(\Lambda_3)\sum_{f_4=0}^{A_4}\!\!P(\Lambda_4)\right]^{\Psi}$.
When different VNF types, i.e., sub-SFCs, are placed in the same DC but different racks, servers and VMs ($\Delta=3$), the failure of DC, i.e., the shared component, affects all $\Psi$ VNFs of a certain sub-SFC and, thus, needs to be considered once and not separately for each out of $\Psi$ VNFs. Thus, the SFC reliability can be calculated by separation the probability that there is at least $\Lambda_1-A_1$ available DCs required for maintenance of at least $k$ sub-SFCs, while the serial reliability model is still valid:
$R_{_{\Delta=3}}=\!\!\sum_{f_1=0}^{A_1}\!\!P(\Lambda_1)\left[\sum_{f_2=0}^{A_2}\!\!P(\Lambda_2)\!\! \sum_{f_3=0}^{A_3}\!\!P(\Lambda_3)\!\!\sum_{f_4=0}^{A_4}\!\!P(\Lambda_4)\right]^{\Psi}$.
Similarly, when different VNF types, i.e., sub-SFCs, are placed in the same DC, same rack and different servers resulting in disjointedness level $\Delta=2$, there are $2$ shared components, DC and rack. Thus, any failure of DC or rack will affect all $\Psi$ VNFs and needs to be considered once to calculate the probability that at least $\Lambda_1-A_1$ DCs and $\Lambda_2-A_2$ racks are available to build $k$ sub-SFC of $\Psi$ VNFs each. Then, the SFC reliability is determined as $R_{_{\Delta=2}}=\!\!\sum_{f_1=0}^{A_1}\!\!P(\Lambda_1)\!\!\sum_{f_2=0}^{A_2}\!\!P(\Lambda_2)\left[\sum_{f_3=0}^{A_3}\!\!P(\Lambda_3)\!\!\sum_{f_4=0}^{A_4}\!\!P(\Lambda_4)\right]^{\Psi}$.
In case of VNF disjointedness only ($\Delta=1$), i.e., different VNF types are placed in the same server and, thus, in the same rack and same DC resulting in three shared component types, a failure of server, rack or DC will lead to a failure of a whole sub-SFC, i.e., to a failure of all $\Psi$ VNFs. Thus, only VNFs itself are independent from each other resulting in SFC reliability defined as $R_{_{\Delta=1}}=\sum_{f_1=0}^{A_1} P(\Lambda_1) \sum_{f_2=0}^{A_2} P(\Lambda_2)\cdot\\ \sum_{f_3=0}^{A_3}  P(\Lambda_3)\left[\sum_{f_4=0}^{A_4} P(\Lambda_4)\right]^{\Psi}.$ As a result, we can derive that the number of summations outside and inside the brackets, in the formulas for reliability calculation above, is a function of 
$\Delta$ and the number of shared components. In other words, the summations for the shared components are outside the brackets, i.e., $\mathcal C-\Delta$ summations, and the $\Delta$ summations for component types whose failure impacts one VNF type only, are placed inside the brackets. Thus, without loss of generality, the SFC reliability for any value $\Delta$ is determined by Eq. \eqref{RDelta}.
\end{proof}

\subsubsection{N-Reliability Classes}
Now, we consider the parallelized SFCs, whereby any out of $n$ active and backup sub-SFCs can belong to one out of $N$, $1\leq N\leq n$, reliability classes. Then, an index $\xi$, $1\leq\xi\leq N$, shows a certain reliability class of any active or backup component utilize to allocate the sub-SFCs from this reliability class $\xi$. Thus, any component type $c$ utilized to allocate reliability class $\xi$ and its component reliability are noted as $c_\xi$ and $p_{c_{\xi}}$, respectively. Multiple sub-SFCs can belong to a certain reliability class $\xi$, if these sub-SFCs are placed following a certain placement strategy and all $\mathcal C$ involved component types have a certain component reliability value $p_{c_{\xi}}$ each. Note that all components of hierarchy level $c$ utilized to allocate any $\Psi$ different VNFs of a certain SFC reliability class $\xi$ have the same component reliability, i.e., $p_{c^1_{\xi}}=...=p_{c^{\Psi}_{\xi}}=p_{c_{\xi}}$. 
Generally, there are $\Psi\geq1$ VNFs per sub-SFC, an arbitrary number of involved component types ($\mathcal C$ types) to place this VNFs and an arbitrary number of reliability classes ($N$ active and backup classes), whereby each reliability class has its own level of disjointedness, $1\leq\Delta_\xi\leq \mathcal C$. The expression for the service reliability needs to consider the level of disjointedness $\Delta_\xi$ of any reliability class $\xi$. The level of disjointedness $\Delta_\xi$ take into account a certain placement strategy and describes the amount of disjoint and shared components, i.e., shared by different VNF types. 
Additionally, some sub-SFCs from different reliability classes can have a common root components, e.g., are allocated in the same DC or in the same rack. The failure of these common root components result in failures of whole sub-SFCs and related component types belonging to different reliability classes. All  reliability classes combined by the common root and related root component are collected in set $\Phi=\{c_{w_1}, c_{w_2}, ..., c_{w_\rho}, ...\}$, where any $c_{w_\rho}$ describes a component type of common root, and each $w_\rho$ is a set of indexes of the reliability classes combined by the common root component of type $c_{w_\rho}$, i.e., $w_\rho=\{\xi_1, \xi_2, ...\}$ and $c_{\xi_1}=c_{\xi_2}=...=c_{w_\rho}$. 
Then, the resulting expression for service reliability needs to take into account $N\geq 1$ reliability classes of SFCs of length $\Psi\geq1$, the shared and common root components utilized for allocation of a sub-SFCs from a certain reliability class $\xi$, i.e., 

\begin{lemma}
When there is multiple reliability classes, $N\geq1$, the placement dependent SFC reliability is a function of the disjointedness level $\Delta$, SFC length $\Psi$, ACF $A_c$ and the amount of available components $\Lambda_c$ for any component type $c$, , whereby the shared and common root components utilized for allocation of sub-SFCs from a certain reliability class $\xi$ need to be taken under consideration, i.e.,
\begin{equation}\label{RNPsiRoot}
\begin{split}
&R_{\Delta_{1},...,\Delta_{N}}(\Phi)=\prod_{\rho=1}^{|\Phi|}\sum_{f_{c_{w_{\rho}}}=0}^{A_{c_{w_{\rho}}}}P(\Lambda_{c_{w_{\rho}}})\cdot\\&
\cdot\prod_{\xi=1}^{N}\!\prod_{c_\xi=1\atop \phi(c_\xi)=1}^{\mathcal C_\xi-\Delta_\xi}\!\!\!\sum_{f_{c_\xi}=0}^{A_{c_\xi}}\!\!\!\!P(\Lambda_{c_\xi})\bigg[\!\!\prod_{\xi=1}^{N}\prod_{c_\xi=\mathcal C_\xi-\Delta_\xi+1\atop \phi(c_\xi)=1}^{\mathcal C_\xi}\sum_{f_{c_\xi}=0}^{A_{c_\xi}}\!\!\!\!P(\Lambda_{c_\xi})\!\bigg]^\Psi\!\!
\end{split}
\end{equation}
, where the function $\phi(c_\xi)$ ensures that the common root components for any reliability class $\xi\in w_\rho$ are considered only once by the first summation over $f_{c_{w_{\rho}}}\equiv f_{c_\xi}$, i.e., 
\begin{equation}\label{RootCntr}
\phi(c_\xi)=\begin{cases}
1, \!\!\!\!\!\!& \text{ if } \forall \rho: (\xi\in w_\rho \cap c_\xi\neq c_{w_\rho})\cup(\xi\notin w_\rho);\\
0,\!\!\!\!\!\! & \text{ else.}
\end{cases}
\end{equation}
\end{lemma}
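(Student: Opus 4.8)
The plan is to build the $N$-class formula by composing the two results already established, namely Lemma~2 (single class, full placement dependence, Eq.~\eqref{RDelta}) and the multi-class coupling mechanism of Lemma~1 (Eq.~\eqref{Rindep}), and then to add the common-root correction as a final layer. I would proceed in three constructive stages, matching the style of the earlier proofs rather than a formal induction.

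First, I would fix an arbitrary class $\xi$ and observe that, by the hypothesis $p_{c^1_\xi}=\dots=p_{c^\Psi_\xi}=p_{c_\xi}$, the sub-SFCs of that class are governed \emph{exactly} by the single-class model of Lemma~2: the $\mathcal C_\xi-\Delta_\xi$ shared component types contribute summations outside the $\Psi$-power bracket, while the $\Delta_\xi$ disjoint types (levels $\mathcal C_\xi-\Delta_\xi+1$ through $\mathcal C_\xi$) contribute summations inside the bracket raised to the power $\Psi$, since a shared failure kills a whole sub-SFC whereas a disjoint failure kills only one VNF type. This reproduces the inner double product of Eq.~\eqref{RNPsiRoot} for a single class with all $\phi(c_\xi)=1$.

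Second, I would assemble the $N$ classes under the serial reliability model: a successful service needs at least $k$ working sub-SFCs drawn across all classes simultaneously, so the classes multiply, giving $\prod_{\xi=1}^{N}$ of the per-class expression. The coupling between classes is carried entirely by the acceptable-failure bounds $A_{c_\xi}$, which — exactly as in the $c=4$ argument of Lemma~1 — are set to $\min\{\,\cdot\,,\,r-\sum f_{c_{\xi'}}\}$ so that the cumulative number of failures of each component type across classes never exceeds the backup budget $r$; written as a product of summations this is the same nested-summation bookkeeping used in Eq.~\eqref{RelNPsi1}, now applied at every hierarchy level rather than only at the VNF level. At this point, with $\Phi=\emptyset$, the expression is Eq.~\eqref{RNPsiRoot} with the leading $\rho$-product absent and every $\phi(c_\xi)=1$. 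Third, I would introduce the common roots collected in $\Phi$: a component of type $c_{w_\rho}$ jointly used by the classes indexed by $w_\rho$ exists physically once, yet the naive product of stage two includes its availability once per class in $w_\rho$. To correct this I would factor each such component into the single leading product $\prod_{\rho=1}^{|\Phi|}\sum_{f_{c_{w_\rho}}=0}^{A_{c_{w_\rho}}}P(\Lambda_{c_{w_\rho}})$, counting it exactly once, and simultaneously delete the redundant per-class summations through the indicator $\phi(c_\xi)$ of Eq.~\eqref{RootCntr}, which returns $0$ precisely when $\xi\in w_\rho$ and $c_\xi=c_{w_\rho}$. Because a common root sits above the classes it joins, its failure cascades into all of their lower-level components and entire sub-SFCs, so pulling its summation to the front rather than into any bracket correctly records that one failure event for all affected classes at once; collecting the three stages yields Eq.~\eqref{RNPsiRoot}.

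As a consistency test I would verify the two boundary reductions: setting $N=1$ and $\Phi=\emptyset$ collapses the formula to Eq.~\eqref{RDelta}, and setting $p_1=p_2=p_3=1$ recovers the placement-independent Eq.~\eqref{Rindep}. The main obstacle will be the third stage: one must argue carefully that moving a shared-across-classes summation into the leading product — while suppressing it inside every class via $\phi$ — reproduces the correct joint probability, i.e.\ that counting the common root exactly once is both necessary, to avoid over-multiplying its reliability, and sufficient, because its failure budget $A_{c_{w_\rho}}$ and its cascading interdependency are shared by construction across all classes in $w_\rho$. This is where the interdependency model of Sec.~\ref{failure-hierarchy} does the real work and where a miscount would most easily arise.
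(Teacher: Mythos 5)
Your proposal is correct and follows essentially the same constructive, stage-wise route as the paper: build the per-class shared/disjoint structure from the single-class lemma, multiply the classes under the serial reliability model with the coupling carried by the ACF bounds, and then add the common-root components as a leading product while suppressing their duplicate per-class summations via $\phi(c_\xi)$. The only difference is cosmetic ordering (the paper passes through the intermediate $\Psi=1$ cases with and without common roots before recombining, whereas you start from the full single-class expression), and your closing remark correctly identifies the one genuinely delicate step — justifying that the common root is counted exactly once — which the paper also treats as the crux.
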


\begin{proof}
First, we assume that the components from different reliability classes do not have any common root components, which corresponds to Inter-DCN placement of active and backup sub-SFCs. Thus, the SFCs of a certain reliability class $\xi$ are placed disjointedly from any other reliability classes. Assuming that any SFC contains only one VNF, $\Psi=1$, additionally relax the placement dependency as there can not be any shared components. Then, the service reliability can be determined based on Eqs. \eqref{RelNPsi1} and \eqref{R-Psi1} as follows
\begin{equation}\label{RN}
\begin{split}
R(\Psi=1, N)=&\prod_{\xi=1}^{N}R(\Psi=1)=\prod_{\xi=1}^{N}\prod_{c_\xi=1}^{\mathcal C}\sum_{f_{c_\xi}=0}^{A_{c_\xi}}\!\!p(\Lambda_{c_\xi},f_{c_\xi},p_{c_\xi})\!\!
\end{split}
\end{equation}
, where failure and availability of each component type and each reliability class are described by own summation. Thus, all $\mathcal C$ component types, i.e., from type $1$ to type $\mathcal C$ are considered for each reliability class $\xi$, $1\leq \xi\leq N$. Thus, any summation determines the probability that service will not fail in case of $f_{c_{\xi}}$ failures of any component type $c$ of a reliability class $\xi$, whereby at most $A_{c_{\xi}}$ failures of this component type $c$ and class $\xi$ are allowed. 

Let us next assume that some sub-SFCs from different reliability classes have a common roots, while there is not any shared component types, i.e., SFC length $\Psi=1$. The failure of the common root components result in failures of whole sub-SFCs and related component types belonging to different reliability classes. All  reliability classes combined by the common root and related root component are collected in set $\Phi$, while any failures of any component type $c_{w_\rho}$ related to the reliability classes from set $w_\rho$ affect ACF $A_{c_\xi}$ and the amount of available components $\Lambda_{c_\xi}$ of all related classes $\xi$, i.e., $\xi\in w_\rho$. Thus, to take into account the common roots for different reliability classes, Eq. \eqref{RN} for $\Psi=1$ is modified as
\begin{equation}\label{RNroots}
\begin{split}
&R(\Psi=1, N, \Phi)
=\prod_{\rho=1}^{|\Phi|}\sum_{f_{c_{w_{\rho}}}=0}^{A_{c_{w_{\rho}}}}\!\!P(\Lambda_{c_{w_{\rho}}})\!\!\prod_{\xi=1}^{N}\prod_{c_\xi=1\atop \phi(c_\xi)=1}^{\mathcal C}\sum_{f_{c_\xi}=0}^{A_{c_\xi}}\!\!P(\Lambda_{c_\xi})\!\!
\end{split}
\end{equation}
, where $\Lambda_{c_{w_{\rho}}}=1=const$ as there can be only one common root $c_{w_\rho}$ for reliability classes from set $w_\rho$. The function $\phi(c_\xi)$ determined by Eq. \eqref{RootCntr} ensures that the common root components for any reliability class $\xi\in w_\rho$ are considered only once, i.e., by the first summation over $f_{c_{w_{\rho}}}$.

Next, let us assume that SFC consists of $\Psi\geq1$ VNFs and there is no common root components, but some sub-SFCs utilize shared components to place $\Psi$ different VNFs. Thus, each reliability class $\xi$ has its own level of disjointedness $\Delta_\xi$, $1\leq\Delta_\xi\leq \mathcal C$, and the expression for the service reliability needs to consider the level of disjointedness $\Delta_\xi$ of any reliability class $\xi$, $1\leq\xi\leq N$. Only considering the reliability class $\xi$ which provides the level of disjointedness $\Delta_\xi$ due to a certain placement strategy, there are $\mathcal C-\Delta_\xi$ shared components, i.e., shared by different VNF types, i.e., components $c_\xi=1_\xi, ...,(\mathcal C-\Delta_\xi)_\xi$, and $\Delta_\xi$ disjoint components, i.e., $c_\xi=(\mathcal C-\Delta_\xi+1)_\xi,...,\mathcal C_\xi$. Then, similar to Eq. \eqref{RDelta}, there are $\mathcal C_\xi$ summations for each reliability class $\xi$, where all summations for shared components and all summations for disjoint components should be separated as each summation for disjoint components is valid for each out of $\Psi$ VNF types. Thus, the service reliability is
\begin{equation}\label{RNPsi}
\begin{split}
&R_{\Delta_{1},...,\Delta_{N}}\!\!=\!\!\prod_{\xi=1}^{N}\!\prod_{c_\xi=1}^{\mathcal C_\xi-\Delta_\xi}\!\!\!\sum_{f_{c_\xi}=0}^{A_{c_\xi}}\!\!\!\!P(\Lambda_{c_\xi})\bigg[\prod_{\xi=1}^{N}\prod_{c_\xi=\mathcal C_\xi-\Delta_\xi+1}^{\mathcal C_\xi}\sum_{f_{c_\xi}=0}^{A_{c_\xi}}\!\!\!\!P(\Lambda_{c_\xi})\bigg]^\Psi\!\!\!\!\!\!
\end{split}
\end{equation}
, where any $A_{c_{\xi}}$ and $A'_{c_{\xi}}$ describes ACF of a shared and disjoint component type $c_\xi$ of any reliability class $\xi$, respectively.
When level of disjointedness is maximal, i.e., $\Delta_{\xi}=\mathcal C$, all component types are disjoint and the same for all $\Psi$ VNFs and related components utilized for their allocation, i.e., all $\mathcal C$ related summations are inside the brackets in Eq. \eqref{RNPsi}, i.e., $\prod_{c=1}^{\mathcal C-\Delta}\sum_{f_c=0}^{A_c}\!\!P(\Lambda_c)=1$, if $\Delta=\mathcal C$. Similar, in case of $\Delta_{\xi}=1$, only components of one type, i.e., $c=\mathcal C$, are disjoint resulting in only one summation per reliability class $\xi$ inside brackets. When $\Delta_{\xi}=\mathcal C-1$, i.e., only component type $1$, i.e., DCs, is a shared component for all $\Psi$ different VNF types of reliability class $\xi$, there is one summation outsides the brackets, which describes the probability that there are enough available components of type $c=1$ to maintain the service.

Finally, when there are some shared components for VNFs from sub-SFCs of length $\Psi\geq1$ as well as common roots for different reliability classes described by set $\Phi$, we can derive the service reliability with Eqs. \eqref{RNPsi} and \eqref{RNroots}. The resulting reliability equation takes into account $N\geq 1$ reliability classes of SFCs of length $\Psi\geq1$, their common root and shared components and failures of any out of $\mathcal C$ component types involved for allocation of a sub-SFCs from a certain reliability class $\xi$, i.e., 
$R_{\Delta_{1},...,\Delta_{N}}(\Phi)=\prod_{\rho=1}^{|\Phi|}\sum_{f_{c_{w_{\rho}}}=0}^{A_{c_{w_{\rho}}}}\!\!P(\Lambda_{c_{w_{\rho}}})R_{\Delta_{1},...,\Delta_{N}}$
and presents Eq. \eqref{RNPsiRoot}.

Finally, let us derive Eq. \eqref{RootCntr}, where $\phi(c_\xi)$ needs to ensure that the common root components are consider only once by the first summation in Eqs. \eqref{RNPsiRoot} and \eqref{RNroots} and only component types $c_\xi$ which are not the common root components are considered in the following summation to calculate the probability that there is enough available components of type $c_\xi$ to maintain the service. As a result, $c_\xi$ needs to be additionally considered, when either the reliability class $\xi$ of considered component type $c_\xi$ does not have a common root components with other reliability classes, i.e., $\forall \rho: \xi\notin w_\rho$, or $\xi$ have some common root components with other reliability classes, i.e., $\xi\in w_\rho$, but the considered component type $c$ is not the common root, i.e., $c_\xi\neq c_{w_\rho}$.
\end{proof}

\subsubsection{Acceptable Component Failures}
The bound of summations $A_{c_{w_{\rho}}}$ and $A_{c_{\xi}}$ in Eq.  \eqref{RNPsiRoot} is what we refer to as the acceptable component failure, which is a function of pre-reserved components of type $c$ from same reliability class $\xi$ and the failures of any other components of higher hierarchy level or failure of preserved components of the lowest hierarchy level over all reliability classes. 
Depending on the placement strategy and, thus, the reliability class $\xi$, there is $\bar n_{1_\xi}$ active and backup components of the highest hierarchy level such as DCs and $n_{c_\xi}$ components of any type $c>1$ allocated inside a component of type $c-1$. Thus, there are in total $\bar n_{1_\xi}\prod_{c'=2}^{c} n_{c'_\xi}$ components of type $c$ utilized for allocation of $n_\xi$ active and backup VNFs of certain type of any reliability class $\xi$.

Without loss of generality, each summation in Eq. \eqref{RNPsiRoot} is a function of the prior summation, i.e., any $A_{c_{\xi}}$ and $\Lambda_{c_{\xi}}$ are the functions of already considered component failures $A_{c_{\xi}}\equiv A_{c_{\xi}}(f_{1_1}, ..., f_{c_1}, ..., f_{\mathcal C_1},\\ ..., f_{1_\xi}, ..., f_{c_\xi}, ..., f_{\mathcal C_\xi}, ...)$ and $\Lambda_{c_{\xi}}\equiv \Lambda_{c_{\xi}}(f_{1_1}, ..., f_{c_1}, ..., f_{\mathcal C_1}, ...,\\ f_{1_\xi}, ..., f_{c_\xi}, ...)$, where the previously assumed failures reduce the acceptable failures of component $c_\xi$ and the amount of remaining available components of type $c$ from class $\xi$, respectively, where $1\leq\xi\leq N$. 
Next, we derive the amount of available components $\Lambda_{c_\xi}$ and ACF $A_{c_\xi}$, where $1\leq\xi\leq N$ and $1\leq c\leq \mathcal C$. 

Let us first derive ACF $A_{c_{w_{\rho}}}$ of the common root components of any reliability class $\xi$ from set $w_{\rho}$, i.e., $\forall \xi: \xi\in w_{\rho}$. Since there is only one common root component per $|w_{\rho}|$ reliability classes, the amount of available roots $\Lambda_{w_{\rho}}\leq1$ and the ACF of the root component is ether $0$ or $1$, both depend on the number of failed VNFs of each type in case of failure of the roots and failed roots from higher hierarchy level resulting in failures of other roots. Thus, the failure of a certain common root $c_{w_{\rho}}$ is only possible, when the other roots from the higher hierarchy do not fail leading to the failure of this root $c_{w_{\rho}}$, i.e., $\forall \rho\rq{}, 1\leq\rho\rq{}\leq \rho-1: f_{c_{w_{\rho}}}\leq1$, if $f_{c_{w_{\rho\rq{}}}}=0 \cap  w_{\rho}\subset w_{\rho\rq{}}$. In other words, the root component can only fail if it is available, i.e., not failed due to failures of other components, $\forall \rho\rq{}, 1\leq\rho\rq{}\leq \rho-1: \Lambda_{c_{w_{\rho}}}=1 \text{, if } f_{c_{w_{\rho\rq{}}}}=0 \cap  w_{\rho}\!\!\subset\!\! w_{\rho\rq{}}$. As a results, the amount of available common roots is defined as
\begin{equation}
\Lambda_{c_{w_{\rho}}}=\begin{cases}
1, \!\!\!\!\!\!& \text{ if } \sum\limits_{\rho\rq{}=1\atop w_{\rho}\subset w_{\rho\rq{}}}^{\rho-1}f_{c_{w_{\rho\rq{}}}}=0;\\
0,\!\!\!\!\!\! & \text{ else.}
\end{cases}
\end{equation}
Generally, to maintain the service, the amount of failed VNFs may not be larger than the number of backup VNFs, i.e., $r$. The amount of VNFs of a certain type belonging to reliability class $\xi\in w_{\rho}$ can be calculated as $n_{c_{w_{\rho}}}\prod_{c=c_{w_{\rho}}+1}^{\mathcal C} n_{c_\xi}=\prod_{c=c_{w_{\rho}}+1}^{\mathcal C} n_{c_\xi}$ VNFs of class $\xi$, which will fail with failure of the root component type $c_{w_{\rho}}$. Since there are $|w_{\rho}|$ reliability classes affected by failure of root component type $c_{w_{\rho}}$, the total number of failed VNFs of certain type from any reliability class $\xi\in w_{\rho}$ is determined as $\sum_{i=1}^{|w_{\rho}|}\prod_{c=c_{w_{\rho}}+1}^{\mathcal C} n_{c_{\xi_i}}$, where $\xi_i\in w_{\rho}$ and $\xi_i\equiv \xi$.  Since there can be other common roots from the higher hierarchy level which affects the considered root and other roots as well, there is a need to consider all other VNF failures due to failure of all other $\rho-1$ roots. Thus, the amount of failed VNFs of certain type over $\rho$ root components can be calculated as a function of any $f_{c_{w_{\rho}}}$, i.e., $\sum_{\rho\rq{}=1}^{\rho} f_{c_{w_{\rho\rq{}}}} \sum_{i=1 \atop \sigma=1}^{|w_{\rho\rq{}}|}\prod_{c=c_{w_{\rho\rq{}}}+1}^{\mathcal C} n_{c_{\xi_i}}$, where  $\sigma=1$, iff $\forall \rho\rq{}\rq{}, 1\leq\rho\rq{}\rq{}\leq\rho\rq{}-1: \xi_i\notin w_{\rho\rq{}\rq{}}$, and ensures that the failures of VNFs from the same reliability class are considered only once. The formula above take into account placement strategies, where, for instance, DC and rack inside this DC represent two common roots for the same reliability classes, whereby DC can combine more reliability classes and lead to failure of rack if fails.  Summarizing all constraints above, the ACF of the common root components can be calculated as follows
\begin{equation}
A_{c_{w_{\rho}}}=\begin{cases}
1, \!\!\!\!\!\!& \text{ if } (r\geq\!\!\! \sum\limits_{\rho\rq{}=1}^{\rho} f_{c_{w_{\rho\rq{}}}} \!\!\!\sum\limits_{i=1 \atop \sigma=1}^{|w_{\rho\rq{}}|}\prod\limits_{c=c_{w_{\rho\rq{}}}+1}^{\mathcal C} n_{c_{\xi_i}}) \cap \Lambda_{c_{w_{\rho}}}=1;\\
0,\!\!\!\!\!\! & \text{ else.}
\end{cases}
\end{equation}
where  $\sigma=1$, iff $\forall \rho\rq{}\rq{}, 1\leq\rho\rq{}\rq{}\leq\rho\rq{}-1: \xi_i\notin w_{\rho\rq{}\rq{}}$.

The amount of available active and backup components $\Lambda_{c_{\xi}}$ of any component type $c_\xi$ for any reliability class $\xi$, after failure of any component of higher hierarchy level, i.e., of type $1$ to $c-1$, is
\begin{equation}\label{LACI}
\!\!\!\!\Lambda_{c_{\xi}}\!\!=\!\!\begin{cases}
\Lambda_{c_{w_{\rho}}}, \!\!\!\!\!\!& \text{ if } \forall \rho: (\xi\in w_\rho) \cap (c_\xi=c_{w_\rho});\\
\bar n_{\mathcal C_{\xi}},\!\!\!\!\!\! & \text{ if } \forall \rho: (\xi\notin w_\rho) \cap (c=1);\\
(\Lambda_{(c-1)_{\xi}}-f_{(c-1)_{\xi}})n_{c_{\xi}}, \!\!\!\!\!\!& \text{ if }\forall \rho: (\xi\notin w_\rho) \cap (1< c\leq\mathcal C).
\end{cases}
\end{equation}
, where the first case determined as $\bar n_{1_{\xi}}=n_{c_\xi}=n_{c_{w_\rho}}=\Lambda_{c_{w_{\rho}}}$, describes a number of the common root components of a certain type $c$ and is the same for any reliability class $\xi$ from set $w_\rho$. However, when there is one reliability class only, $N=1$, there is no common root components and the first case in Eq. \eqref{LACI} will be never true.
The derivation of Eq. \eqref{LACI} is presented in Appendix \ref{firstLambda}.

Generally, the number of any failed components $f_{c_{\xi}}$ of type $c_\xi$ can vary as $0\leq f_{c_{\xi}}\leq A_{c_{\xi}}$ for shared components and $0\leq f_{c_{\xi}}\leq A'_{c_{\xi}}$ for disjoint components from any certain reliability class $\xi$, whereby any $f_{c_{\xi}}$ out of $\Lambda_{c_{\xi}}$ components of type $c_\xi$ of reliability class $\xi$ can fail without interrupting the end-to-end service. Without loss of generality, ACF, i.e., $A_{c_{\xi}}$ for components shared by $\Psi$ different VNF types of class $\xi$ and $A'_{c_{\xi}}$ for disjoint components of the same class, is a function of available components $\Lambda_{c_{\xi}}$ of type $c$ and a reliability class $\xi$, the amount of provided backup VNFs $r$, and the number of VNFs considered as already failed $F_{c_\xi}$ after $f_{c_{\xi}}$ components of any type $c$, $1\leq c\leq\mathcal C$, and any class $\xi$, $1\leq\xi\leq\mathcal C$ failed. The amount of failed components $F_{c_{\xi}}$ of the lowest hierarchy level $\mathcal C$ and a certain type, e.g., VNFs1, due to failures ($f_{c'_\xi}$) of any component types $c'$, i.e., $c'\leq c$ belonging to a certain reliability class $\xi$ can be generalized to an iterative function as 
\begin{equation}\label{Fcomp}
F_{c_{\xi}}=\begin{cases}
F_{(c-1)_{\xi}}+f_{c_{\xi}}\prod\limits_{c'=c+1}^{\mathcal C}n_{c'_{\xi}}&\text{, if } 1\leq c\leq\mathcal C;\\
0&\text{, else. }
\end{cases}
\end{equation}
, where the failures of any component from the higher hierarchy level, from $1$ to $c$, are taken into account. These $F_{c_{\xi}}$ failed components of type $\mathcal C$ due to failures of $f_{c_{\xi}}$ components of the type from $1$ to $c$ reduce the amount of available backup components $r$, which are allowed to fail, i.e., reduces ACF. The failure of any common root components $c_{w_\rho}$ are taken into account by Eq. \eqref{Fcomp} as $\forall \rho: f_{c_{w_\rho}}\equiv f_{c_\xi}$, if $c_{w_\rho}=c_\xi$ and $\xi\in w_\rho$.

Then, ACF $A_{c_{\xi}}$ of any shared component, i.e., placed outside the brackets in Eq. \eqref{RNPsiRoot}, is a minimum between the total number of available VNFs provided by a certain reliability class $\xi$ and allocated inside component $c$, i.e., $\Lambda_{c_\xi}\prod_{c'=c_\xi+1}^{\mathcal C}n_{c'_{\!\xi}}$ and the total number of available backup VNFs $r$ reduced by any failures of any component type and from any reliability class. The result will show how many VNFs of a certain reliability class $\xi$ may fail without any affect of service maintenance.
The amount of backup VNFs is generally reduced by the VNF failures of class $\xi$ due to failures of the components from the higher hierarchy level, i.e., from $1$ to $c-1$, and from the same reliability class $\xi$, which can be calculated with Eq. \eqref{Fcomp} as $F_{(c-1)_{\xi}}$. Additionally, the number of backup VNFs is reduced through VNF failures due to failure of any shared component from other reliability classes with an index which is less than $\xi$. That failures can be also determined with Eq. \eqref{Fcomp} as $\sum_{l=1}^{\xi-1}F_{(\mathcal C-\Delta_l)_l}$, where $\mathcal C-\Delta_l$ defines the lowest hierarchy of a shared component related to the reliability class $l$. Thus, defining ACF for component $c_\xi$, we need to take into account that there are $r-\sum_{l=1}^{\xi-1}F_{(\mathcal C-\Delta_l)_l}-F_{(c-1)_{\xi}}$ remaining backup VNFs, which can fail without service interruption. Thus, the amount of VNFs from class $\xi$, which may fail are defined as $min\{\Lambda_{c_\xi}\prod_{c'=c_\xi+1}^{\mathcal C}n_{c'_{\!\xi}}; r-\sum_{l=1}^{\xi-1}F_{(\mathcal C-\Delta_l)_l}-F_{(c-1)_{\xi}}\}$. Since we are interested in ACF of component $c_\xi$, which can be from any hierarchy level, i.e., $1\leq c\leq \mathcal C$, there is a need for the normalization by the amount of VNFs allocated by component $c_\xi$, i.e., by $\prod_{c'=c+1}^{\mathcal C}n_{c'_{\!\xi}}$. This is also the number of VNFs, which will fail if one component $c_\xi$ fails.
As a result, ACF for a certain component type $c_\xi$ shared by $\Psi$ different VNFs of a certain reliability class $\xi$ can be derived as 
\begin{equation}\label{Ashar}
A_{c_{\xi}}=\left\lfloor\frac{min\left\{\Lambda_{c_\xi}\prod\limits_{c'=c+1}^{\mathcal C}n_{c'_{\!\xi}};r-\sum\limits_{l=1}^{\xi-1}F_{(\mathcal C-\Delta_l)_l}-F_{(c-1)_{\xi}}\right\}}{\prod\limits_{c'=c+1}^{\mathcal C}n_{c'_{\!\xi}}}\right\rfloor
\end{equation}
, where the sum of all $F_{(\mathcal C-\Delta_{l})_l}$ identifies all summations in Eq. \eqref{RNPsiRoot} located prior to summation with a bound $A_{c_{\xi}}$ and, finally, all prior failures of shared components.
Obviously, that if any prior reliability class $l<\xi$ have disjointedness level of $\Delta_l=\mathcal C$ all $\Psi$ VNFs as well as all related component types of this class $l$ are disjoint and there is no shared components and, thus, summations  outside the brackets for this reliability class in Eq. \eqref{RNPsiRoot}, i.e., $F(\Delta_{l})=0$. When $\Delta_l<\mathcal C$, a failure of $\mathcal C-\Delta_l$ components types of a certain class $l$ needs to be considered for calculation of ACF $A_{c_{\xi}}$, where $l<\xi$.
Thus, the function $F_{(\Delta_{\xi}-1)_{\xi}}$ provides a total number of the failed components of lowest hierarchy level due to failure of any component types up to $\Delta_\xi-1$.

Similar, ACF for any disjoint, i.e., unshared, component type $c_\xi$ and any reliability class $\xi$, i.e., $A'_{c_{\xi}}$ for summations inside the brackets in Eq. \eqref{RNPsiRoot}, is a function of the remaining available components $\Lambda_{c_{\xi}}$ of type $c$ from a reliability class $\xi$, the amount of provided backup VNFs $r$ and the number of VNFs considered as failed. Thus, there is a need to define the amount of VNFs which may still fail without interrupting the service, i.e., the minimum between the total number of available VNFs provided by the reliability class $\xi$ and allocated inside component $c_\xi$, i.e., $\Lambda_{c_\xi}\prod_{c'=1}^{c-1}n_{c'_{\!\xi}}$ and the amount of remaining backup VNFs after failures of any other components from any other reliability classes. Defining ACF of any disjoint component type $c_\xi$, the number of backup VNFs $r$ is already reduced by possible failures of components from hierarchy level $1$ to $\mathcal C$ of all reliability classes $l$, $1\leq l\leq \xi-1$. These failures can be calculated with Eq. \eqref{Fcomp} as $\sum_{l=1}^{\xi-1}F_{\mathcal C_l}$ failed VNFs from $\xi-1$ different reliability classes. Some VNFs from reliability class $\xi$ could fail due to failures of components from higher hierarchy level belonging to the same class $\xi$, the number of failed VNFs from reliability class $\xi$ is calculated with Eq. \eqref{Fcomp} as $F_{(c-1)_{\xi}}$, which additionally reduce the number of available backup VNFs. Moreover, the shared components of any reliability class $l$, $\xi<l\leq N$, i.e., these reliability classes are not yet considered inside the brackets in Eq. \eqref{RNPsiRoot}, can fail resulting in $\sum_{l=\xi+1}^{N}F_{(\mathcal C-\Delta_l)_l}$ additional VNF failures. Thus, the amount of backup VNFs of a certain type is determined as $r-\sum_{l=1}^{\xi-1}F_{\mathcal C_l}-F_{(c-1)_{\xi}}-\sum_{l=\xi+1}^{N}F_{(\mathcal C-\Delta_l)_l}$ and defines the maximal number of VNFs from the reliability class $\xi$ which may fail due to failures of component $c_\xi$, which can be reduced through value $\Lambda_{c_\xi}\prod_{c'=1}^{c-1}n_{c'_{\!\xi}}$. To calculate the amount of components $c_\xi$, which may fail resulting in allowed VNFs failures calculated above, the required normalization is performed similar to Eq. \eqref{Ashar} as $\prod_{c'=c+1}^{\mathcal C}n_{c'_{\!\xi}}$ to determine the number of VNFs allocated inside component $c_\xi$.
As a result, ACF for a certain component type $c_\xi$ utilized to disjointly allocate $\Psi$ different VNFs of a certain reliability class $\xi$ is
\begin{equation}\label{Adis}
A'_{c_{\xi}}\!\!=\!\!\left\lfloor\!\!\frac{min\left\{\!\!\Lambda_{c_\xi}\prod\limits_{c'=1}^{c-1}n_{c'_{\!\xi}};r\!-\!\!\sum\limits_{l=1}^{\xi-1}F_{\mathcal C_l}\!-\!F_{(c-1)_{\xi}}\!-\!\!\sum\limits_{l=\xi+1}^{N}F_{(\mathcal C-\Delta_l)_l}\right\}}{\prod\limits_{c'=c+1}^{\mathcal C}n_{c'_{\!\xi}}}\!\!\right\rfloor
\end{equation}

\begin{table*}[]
\centering
\caption{The placement dependent SFC reliability for $k=1$ vs. backup SFCs $r$ and SFC length $\Psi$}\vspace{-0.3cm}
\label{compare}
\resizebox{\textwidth}{!}{%
\begingroup
\setlength{\tabcolsep}{0.7pt} 
\renewcommand{\arraystretch}{1.2} 
\begin{tabular}{|ccc|c|c|c|c|c|c|c|c|c|c|c|c|}
\hline
\multirow{3}{*}{$N_r$} & \multirow{3}{*}{$\Delta$} & \multirow{3}{*}{Pl.} & \multicolumn{6}{c|}{Case 1} & \multicolumn{6}{c|}{Case 2} \\ \cline{4-15} 
 &  &  & \multicolumn{3}{c|}{$\Psi=4$} & \multicolumn{3}{c|}{$\Psi=8$} & \multicolumn{3}{c|}{$\Psi=4$} & \multicolumn{3}{c|}{$\Psi=8$} \\ \cline{4-15} 
 &  &  & $r=0$ & $r=1$ & $r=100$ & $r=0$ & $r=1$ & $r=100$ & $r=0$ & $r=1$ & $r=100$ & $r=0$ & $r=1$ & $r=100$ \\ \cline{4-15} 
4 & 1 & s & 0.9595298551 & 0.9995123444 & $\to 1$ & 0.9217205502 & 0.9990295230 & $\to 1$ & 0.9999300030 & 0.9999999963 & $\to 1$ & 0.9998900055 & 0.9999999935 & $\to 1$ \\
4 & 2 & s & 0.9566541431 & 0.9995075495 & $\to 1$ & 0.9152878303 & 0.9990157556 & $\to 1$ & 0.9999000045 & 0.9999999948 & $\to 1$ & 0.9998200153 & 0.9999999900 & $\to 1$ \\
4 & 3 & s & 0.9563671756 & 0.9995073731 & $\to 1$ & 0.9146473210 & 0.9990150262 & $\to 1$ & 0.9998700078 & 0.9999999939 & $\to 1$ & 0.9997500300 & 0.9999999879 & $\to 1$ \\
4 & 4 & v, s & 0.9563384848 & 0.9995073585 & $\to 1$ & 0.9145832976 & 0.9990149596 & $\to 1$ & 0.9998400120 & 0.9999999936 & $\to 1$ & 0.9996800496 & 0.9999999872 & $\to 1$ \\ \hline
3 & 1 & s & 0.9595298551 & 0.9995031488 & 0.9999899999 & 0.9217205502 & \textbf{0.9990210788} & 0.9999899999 & 0.9999300030 & 0.9999899976 & 0.9999899999 & 0.9998900055 & 0.9999899956 & 0.9999899999 \\
3 & 2 & s & 0.9566541431 & 0.9994984114 & 0.9999899999 & 0.9152878303 & 0.9990074399 & 0.9999899999 & 0.9999000045 & 0.9999899967 & 0.9999899999 & 0.9998200153 & 0.9999899935 & 0.9999899999 \\
3 & 3 & v, s & 0.9563671756 & 0.9994982407 & 0.9999899999 & 0.9146473210 & 0.9990067233 & 0.9999899999 & 0.9998700078 & 0.9999899964 & 0.9999899999 & 0.9997500300 & 0.9999899928 & 0.9999899999 \\
3 & 4 & v & 0.9563384848 & 0.9994682561 & 0.9999600006 & 0.9145832976 & 0.9989367949 & 0.9999200028 & 0.9998400120 & \textbf{0.9999599970} & \textbf{0.9999600006} & 0.9996800496 & \textbf{0.9999199956} & \textbf{0.9999200028} \\ \hline
2 & 1 & s & 0.9595298551 & 0.9994111840 & 0.9998900009 & 0.9217205502 & 0.9989366284 & 0.9998900009 & 0.9999300030 & 0.9999799988 & 0.9999800001 & 0.9998900055 & 0.9999799976 & 0.9999800001 \\
2 & 2 & v, s & 0.9566541431 & 0.9994070213 & 0.9998900009 & 0.9152878303 & 0.9989242748 & 0.9998900009 & 0.9999000045 & 0.9999799985 & 0.9999800001 & 0.9998200153 & 0.9999799969 & 0.9999800001 \\
2 & 3 & v & 0.9563671756 & 0.9991072291 & 0.9995900640 & 0.9146473210 & 0.9982252375 & 0.9991902879 & 0.9998700078 & \textbf{0.9999499994} & \textbf{0.9999500010} & 0.9997500300 & \textbf{0.9999100004} & \textbf{0.9999100036} \\
2 & 4 & v & 0.9563384848 & 0.9990772562 & 0.9995600766 & 0.9145832976 & 0.9981553639 & 0.9991203467 & 0.9998400120 & \textbf{0.9999200012} & \textbf{0.9999200028} & 0.9996800496 & \textbf{0.9998400088} & \textbf{0.9998400120} \\ \hline
1 & 1 & v, s & 0.9595298551 & 0.9984906149 & 0.9988901109 & 0.9217205502 & 0.9980912785 & 0.9988901109 & 0.9999300030 & 0.9999699999 & 0.9999700003 & 0.9998900055 & 0.9999699995 & 0.9999700003 \\
1 & 2 & v & 0.9566541431 & 0.9954981375 & 0.9958964363 & 0.9152878303 & 0.9911255646 & 0.9919188219 & 0.9999000045 & \textbf{0.9999400011} & \textbf{0.9999400015} & 0.9998200153 & \textbf{0.9999000037} & \textbf{0.9999000045} \\
1 & 3 & v & 0.9563671756 & 0.9951995179 & 0.9955976973 & 0.9146473210 & 0.9904319848 & 0.9912246871 & 0.9998700078 & \textbf{0.9999100032} & \textbf{0.9999100036} & 0.9997500300 & \textbf{0.9998300128} & \textbf{0.9998300136} \\
1 & 4 & v & 0.9563384848 & 0.9951696622 & 0.9955678297 & 0.9145832976 & 0.9903626567 & 0.9911553034 & 0.9998400120 & \textbf{0.9998800062} & \textbf{0.9998800066} & 0.9996800496 & \textbf{0.9997600268} & \textbf{0.9997600276} \\ \hline
\end{tabular}%
\endgroup
}\vspace{-0.2cm}
\end{table*}

\begin{table*}[h!]
\centering
\caption{Service reliability provided by Pl. with $N_r=4$ and $\Delta=1$ vs. a number of parallel active $k$ and backup $r$ sub-SFCs.}
\label{res}\vspace{-0.3cm}
\resizebox{0.78\textwidth}{!}{%
\begin{tabular}{|c|ccc|ccc|}
\hline
              & \multicolumn{3}{c|}{$\Psi=4$}                              & \multicolumn{3}{c|}{$\Psi=8$}                              \\\hline
$\frac{r}{k}$ & $k=1$             & $k=4$             & $k=8$             & $k=1$             & $k=4$             & $k=8$             \\\hline
0             & 0.959529855054096 & 0.847683965207723 & 0.718568104870289 & 0.921720550240843 & 0.721767099608634 & 0.520947746077460 \\
0.125         & -                 & -                 & 0.983696719311009 & -                 & -                 & 0.968371471691151 \\
0.25          & -                 & 0.995273874413365 & 0.999384198321856 & -                 & 0.990697185787667 & 0.998774974539633 \\
0.375         & -                 & -                 & 0.999981195954079 & -                 & -                 & 0.999962424513603 \\
0.5           & -                 & 0.999893621626573 & 0.999999500094456 & -                 & 0.999787784003278 & 0.999999000321022 \\
0.625         & -                 & -                 & 0.999999987999969 & -                 & -                 & 0.999999976000402 \\
0.75          & -                 & 0.999997932451063 & 0.999999999734000 & -                 & 0.999995866449611 & 0.999999999468002 \\
0.875         & -                 & -                 & 0.999999999994470 & -                 & -                 & 0.999999999988942 \\
1             & 0.999512344397461 & 0.999999963313851 & 0.999999999999890 & 0.999029523012867 & 0.999999926631523 & 0.999999999999781 \\
1.125         & -                 & -                 & 0.999999999999997 & -                 & -                 & 0.999999999999995 \\
1.25          & -                 & 0.999999999389706 & 0.999999999999999 & -                 & 0.999999998779421 & 0.999999999999998 \\
1.375         & -                 & -                 & 0.999999999999999 & -                 & -                 & 0.999999999999999\\\hline
\end{tabular}%
}\vspace{-0.2cm}
\end{table*}

\begin{table*}[]
\centering
\caption{Service reliability with $N=2$ different reliability classes in configuration $k=4$, $r=3$ and $\Psi=4$.}
\label{tabLast}\vspace{-0.3cm}
\resizebox{\textwidth}{!}{%
\begingroup
\setlength{\tabcolsep}{0.7pt} 
\renewcommand{\arraystretch}{1.2} 
\begin{tabular}{|r|c|c|c|c|c|c|c|c|c|}
\hline
active\textbackslash{}backip & $N_r$ & 4 & 4 & 4 & 4 & 3 & 3 & 3 & 3 \\ \cline{3-10} 
$N_r$ & $\Delta$ & 1 & 2 & 3 & 4 & 1 & 2 & 3 & 4 \\ \hline
4 & 1 & \textit{\textbf{0.9999979324510640}} & \textit{\textbf{0.9999979318655040}} & \textit{\textbf{0.9999979318286700}} & \textit{\textbf{0.9999979318251840}} & \textit{\textbf{0.9999964124393610}} & \textit{\textbf{0.9999964118583170}} & \textit{\textbf{0.9999964118218760}} & \textit{\textbf{0.9999961885402720}} \\
4 & 2 & \textit{\textbf{0.9999979317835310}} & \textit{\textbf{0.9999979317005130}} & \textit{\textbf{0.9999979316975760}} & \textit{\textbf{0.9999979316973230}} & \textit{\textbf{0.9999963106145000}} & \textit{\textbf{0.9999963105326350}} & \textit{\textbf{0.9999963105297770}} & \textit{\textbf{0.9999961883993840}} \\
4 & 3 & \textit{\textbf{0.9999979317535350}} & \textit{\textbf{0.9999979316968350}} & \textit{\textbf{0.9999979316951940}} & \textit{\textbf{0.9999979316950590}} & \textit{\textbf{0.9999963005396360}} & \textit{\textbf{0.9999963004838190}} & \textit{\textbf{0.9999963004822340}} & \textit{\textbf{0.9999961883957160}} \\
4 & 4 & \textit{\textbf{0.9999979317508330}} & \textit{\textbf{0.9999979316965460}} & \textit{\textbf{0.9999979316950160}} & \textit{\textbf{0.9999979316948910}} & \textit{\textbf{0.9999962995331520}} & \textit{\textbf{0.9999962994797230}} & \textit{\textbf{0.9999962994782460}} & \textit{\textbf{0.9999961883954060}} \\ \hline
3 & 1 & {{0.9999879366414630}} & {{0.9999879360639370}} & {{0.9999879360276900}} & {{0.9999879360242610}} & {{0.9999864169785840}} & {{0.9999864164055270}} & {{0.9999864163696700}} & {{0.9999861942812340}} \\
3 & 2 & 0.9999879359819710 & 0.9999879359019470 & 0.9999879358991620 & 0.9999879358989240 & 0.9999863151586640 & 0.9999863150797630 & 0.9999863150770560 & 0.9999861941428650 \\
3 & 3 & 0.9999879359525630 & 0.9999879358984220 & 0.9999879358968970 & 0.9999879358967720 & 0.9999863050840810 & 0.9999863050307960 & 0.9999863050293240 & 0.9999861941393090 \\
3 & 4 & 0.9999579366144800 & 0.9999579365603410 & 0.9999579365588160 & 0.9999579365586910 & 0.9999563057949240 & 0.9999563057416410 & 0.9999563057401690 & 0.9999561948534810 \\ \hline
2 & 1 & 0.9998879792477240 & 0.9998879787470810 & 0.9998879787164170 & 0.9998879787135260 & 0.9998864630738730 & 0.9998864625772640 & 0.9998864625469480 & 0.9998862523920450 \\
2 & 2 & 0.9998879786661850 & 0.9998879786134870 & 0.9998879786120130 & 0.9998879786118920 & 0.9998863613009290 & 0.9998863612490670 & 0.9998863612476450 & 0.9998862522773380 \\
2 & 3 & 0.9995880422682250 & 0.9995880422155420 & 0.9995880422140690 & 0.9995880422139490 & 0.9995864253881300 & 0.9995864253362840 & 0.9995864253348620 & 0.9995863163972430 \\
2 & 4 & 0.9995580549268320 & 0.9995580548741510 & 0.9995580548726770 & 0.9995580548725570 & 0.9995564380952430 & 0.9995564380433980 & 0.9995564380419760 & 0.9995563291076250 \\ \hline
1 & 1 & 0.9988884761855780 & 0.9988884761457560 & 0.9988884761447220 & 0.9988884761446380 & 0.9988869949822420 & 0.9988869949430760 & 0.9988869949420800 & 0.9988869043083370 \\
1 & 2 & 0.9958948064235610 & 0.9958948063838590 & 0.9958948063828270 & 0.9958948063827440 & 0.9958933296593930 & 0.9958933296203440 & 0.9958933296193510 & 0.9958932392572380 \\
1 & 3 & 0.9955960678574820 & 0.9955960678177920 & 0.9955960678167610 & 0.9955960678166780 & 0.9955945915362990 & 0.9955945914972620 & 0.9955945914962690 & 0.9955945011612620 \\
1 & 4 & 0.9955662002741250 & 0.9955662002344360 & 0.9955662002334050 & 0.9955662002333220 & 0.9955647239972310 & 0.9955647239581950 & 0.9955647239572020 & 0.9955646336249050 \\ \hline
\end{tabular}%
\endgroup
}\vspace{-0.6cm}
\end{table*}

\section{Numerical Results}\label{res}
We now evaluate the placement and component dependent SFC reliability considering all possible combinations of disjointedness level $\Delta$ with minimal number protected component types $N_r$ resulting in $16$ combinations following two placement strategies. The first strategy is " VNF placement based on SFC", where VNFs of different types are placed together, and the second is a "placement based on VNF type", where  VNFs of the same type are placed together. We consider two scenarios: 1) Case 1 relates to practice, where $p_1=0.99999$, $p_2=0.9999$, $p_3=0.999$, $p_4=0.99$; and 2) Case 2 is idealized scenario, where $p_1=p_2=p_3=p_4=0.99999$. We investigate the SFC reliability over $k=1, 4, 8$ active SFCs with $\Psi=4, 8$ VNFs and different amount of  backup SFCs $r$. We verified analytical results calculated with Eqs. \eqref{RDelta} and \eqref{RNPsiRoot} by Monte-Carlo simulations. Since the simulation results overlapped with analytical results (with 95\% confidence interval), we present only analytical results in a tabular form for clarity, where $N_r$ and in most cases SFC reliability values are sorted in the descending order. 

First, we assume that there is only one reliability class $N=1$ one active sub-SFC and evaluate the impact of $\Delta$, $N_r$, $\Psi$ and $r$ on the resulting SFC reliability.
Tab. \ref{compare} shows the SFC reliability for Case 1 and Case 2 as a function of the amount of backup sub-SFC $r$, SFC length $\Psi$ and different VNF placement strategies defined through $\Delta$ and $N_r$. The related placement (Pl.) is marked with $v$ for ``VNF based'' and with $s$ for ``SFC based'' strategies.
Without backup protection ($r=0$), the SFC reliability is independent from $N_r$, since there are no backup components. However, as expected, for a fixed  $N_r$  the SFC reliability decreases with increasing level of disjointedness $\Delta$ and increasing SFC length $\Psi$ (for fixed  $\Delta$ but different $N_r = 1,..,4$ the reliabilities are identical). As expected, Case 2 provides higher SFC reliability due to higher component reliability defined, but reflects the same functional dependency. Without any protection, a disjointedness of $\Delta=1$ will provide the highest end-to-end service reliability. 

Now, we consider the usual 1+1 backup protection, i.e., $r=1$.  At first glance, the SFC reliability decreases with decreasing amount of backup component types $N_r$, and an increasing level of disjointedness $\Delta$, e.g., for Case 1 with $\Psi=4$. Intuitive, that seems to be a practical rule. However, some placement strategies with low $N_r$ provide higher SFC reliability (marked bold) than some placements with a higher $N_r$. In Case 1 with a SFC length of $\Psi=8$,  placement with $N_r=3$ and $\Delta=1$ outperforms placements with $N_r=4$ and $\Delta=2,3,4$. That is because the SFC in these placement strategies will be most likely interrupted due to VNF failures caused by the longer VNF chain. Then, if we consider the VNF components of a SFC only, its reliability relates to $(p_4)^{\Psi}$, i.e., the lower reliability of the longer SFC chain, i.e., $(p_4)^{8} \approx 0.92274 < (p_4)^{4} \approx 0.9606 $, can not be compensated enough by the backup DC, rack and server. Thus, the amount of other backup (hardware) components, i.e., $N_r$, can be reduced. When we take a look on Case 2, where the placement strategies based on VNF type are marked bold. In this example we observe, that a placement  based on SFC with lower $N_r$ outperform a placement based on VNF type, e.g., placement with $N_r=2$ and $\Delta=1$ outperforms placement with $N_r=3$ and $\Delta=4$.  Moreover, also other placement strategies based on VNF type (marked bold) provide the lowest SFC reliability for any value of $r$ and $\Psi$ and it is preferable to use a placement strategy based on SFC. 

Next, let us consider the impact of increasing amount of backup sub-SFCs on resulting SFC reliability, where $r=100$ can be viewed as a good approximation of an upper bound. 
First, a more detailed rule can be observed for both use cases, which also reflects the case without protection: \textit{ For a given value of $N_r$, the reliability decreases with increasing  level of disjointedness $\Delta$, and, for a fixed value of  $\Delta$ the reliability increases with increasing $N_r$. }
For instance,  for a  $\Delta=1$ and descending $N_r=4...1$, as can be seen for all $r$ and $\Psi$, the higher $N_r$ results in the higher reliability.  Furthermore, the reliability increase with increasing redundancy, i.e., $~0.9$, $~0.999$, $~1$ with $r=0,1,100$, respectively, for $N_r=4$ and $\Delta=1$ in Case 1.
 In contrast, with $N_r=2$ and $N_r=1$ and $\Delta=1$ for both, the provided SFC reliability of ``3-'' and ``2-''nines is almost independent from value $r\neq0$, respectively.
As discussed above and verified by the results in Tables \ref{compare}, the highest SFC reliability in any configuration can be reached with $N_r=4$ and $\Delta=1$, where each SFC is allocated in the same server, but VNFs of the same type are placed in different DCs allowing protection of DC, rack, server and VNFs, i.e., $N_r=4$. 

Thus, next we consider only the placement with $N_r=4$ and $\Delta=1$, which provides the highest service reliability, in setting with $k\geq 1$ active sub-SFCs and Case 1. Tab. \ref{res} shows the service reliability as a function of $k$, $r$ and $\Psi$. The first column contain the amount of backup sub-SFCs utilized normalized by the number of active sub-SFCs. The increasing number of VNFs in SFC, i.e., $\Psi$, decreases the service reliability. In contrast, the increasing number of active sub-SFC significantly increases the service reliability if $r>0$. The reliability of ``six nines'' can be reached with configuration $k=4$ and $r=4$ or $k=8$ and $r=4$ for both $\Psi=4$ and $\Psi=8$, while serial SFC ($k=1$) requires $r=3$ backup sub-SFCs, which means $300\%$ redundancy.

Finally, we investigate the SFC reliability as a function of  placement strategies different for active and backup SFCs resulting in $N=2$ different reliability classes. We use configuration with $k=4$, $r=3$, $\Psi=4$ and component reliability defined for Case 1. Tab. \ref{tabLast} shows the resulting SFC reliability, where the first two columns and the two upper rows ($N_r$ and $\Delta$) describe placement strategy of active and backup sub-SFCs, respectively. Also here the general rule is for a fixed  $N_r$  the SFC reliability decreases with increasing level of disjointedness $\Delta$. Thus, we show the results for backup placement for $N_r=4, 3$ as that provides the highest SFC reliability. However, the highest reliability of \lq{}five nines\rq{} can be reached, when active sub-SFCs are placed with heterogeneity degree $N_r=4$ (marked bold italic).

\section{Conclusion}
In this paper, we provided a complete and generalized, simple reliability analysis, that considers system component sharing, heterogeneity and multitude of components as well as their interdependencies. Our analysis is based on combinatorial analysis and a reduced binomial theorem. Surprisingly, this simple approach could be utilized to analyzing rather complex and generic SFC configurations. The analytical results confirmed by simulation showed some trade-offs between component heterogeneity and disjointedness for SFC placement, whereby the increasing disjointedness level does not automatically lead to the highest SFC reliability.

%
%
%

 \bibliographystyle{IEEEtran}
 \bibliography{bibL}
\section{Appendix}

\subsection{Derivation of Eq. \eqref{LACI}}\label{firstLambda}
Since all components of a certain reliability class are placed separately from components of any other reliability class and only some common root components $c_{w_\rho}$ can be utilized to combine different classes, there is a need to consider each reliability class $\xi$ separately, whereby $\Lambda_{c_{\xi}}$ is a functions of VNF placement strategy and amount of available and failed components related to the reliability class $\xi$. Since DCs are the components from the highest hierarchy level and independent from failures of any other component types, the amount of available DCs is equal to the amount of data centers required to allocate $n_\xi$ sub-SFCs of reliability class $\xi$, i.e., $\Lambda_{1_\xi}=\bar n_{1_\xi}$. The failure of $f_{1_\xi}$ DCs causes a failure of $f_{1_\xi} n_{2_\xi}$ racks reducing the overall number of available racks as $\Lambda_{2_\xi}=(\Lambda_{1_\xi}-f_{1_\xi})n_{2_\xi}$, where $\Lambda_{2_\xi}\equiv \Lambda_{c_\xi}=(\Lambda_{c-1_\xi}-f_{c-1_\xi})n_{c_\xi}$. Then, the remaining amount of servers after DC and rack failures is $\Lambda_{3_\xi}=((\Lambda_{1_\xi}-f_{1_\xi})n_{2_\xi}-f_{2_\xi})n_{3_\xi}=(\Lambda_{2_\xi}-f_{2_\xi})n_{3_\xi}$, where $\Lambda_{3_\xi}\equiv\Lambda_{c_\xi}=(\Lambda_{c-1_\xi}-f_{c-1_\xi})n_{c_\xi}$. After failures of $f_{1_\xi}$ DCs, $f_{2_\xi}$ racks and $f_{3_\xi}$ servers the remaining amount of VNFs of any type is $\Lambda_{4_\xi}=(((\Lambda_{1_\xi}-f_{1_\xi})n_{2_\xi}-f_{2_\xi})n_{3_\xi}-f_{3_\xi})n_{4_\xi}=(\Lambda_{3_\xi}-f_{3_\xi})n_{4_\xi}\equiv (\Lambda_{c-1_\xi}-f_{c-1_\xi})n_{c_\xi}$, if $c_\xi=4$. 
Since some reliability classes have a common root component $c_{w_\rho}$, which impact any reliability class $\xi$, $\xi\in w_\rho$, there is a need to consider the availability and failure of common root components from a set $\Phi$ as $n_{c_\xi}\equiv n_{c_{w_\rho}}=\Lambda_{c_{w_{\rho}}}$ for any $\xi\in w_\rho$. Thus, the general expression for the amount of available components of any type $c$ from a reliability class $\xi$ is represented by Eq. \eqref{LACI}.


\end{document}